\pgfplotsset{compat=1.14}
\newcommand{\eqdef}{\stackrel{\normalfont\mbox{def}}{=}}
\declaretheorem{theorem}
\declaretheorem[sibling=theorem]{lemma}
\declaretheorem[sibling=theorem]{corollary}
\declaretheorem[sibling=theorem]{observation}
\declaretheorem[sibling=theorem,style=definition]{definition}
\declaretheorem[name={Open Problem}]{problem}
\let\leq\leqslant
\let\geq\geqslant
\newcommand{\polylog}{\mathrm{polylog}}
\newcommand{\Oh}[1]{\mathcal{O}{\left(#1\right)}}
\newcommand{\oh}[1]{o{\left(#1\right)}}
\newcommand{\Ot}[1]{\widetilde{\mathcal{O}}{\left(#1\right)}}
\newcommand{\Th}[1]{\Theta{\left(#1\right)}}
\newcommand{\Om}[1]{\Omega{\left(#1\right)}}
\newcommand{\inv}{\mathop{inv}}
\newcommand{\eqp}{\mathop{eqp}}
\newcommand{\invb}{\mathop{inv}}
\newcommand{\eqpb}{\mathop{eqp}}
\newcommand{\eps}{\varepsilon}
\newcommand{\pr}[1]{\mathop{Pr}[#1]}
\newcommand{\ev}[1]{\mathbb{E}[#1]}
\newcommand{\var}[1]{\mathop{Var}[#1]}
\newcommand{\Z}{\mathbb{Z}}
\newcommand{\base}{\mathop{base}}
\DeclareRobustCommand{\bbone}{\text{\usefont{U}{bbold}{m}{n}1}}
\title{Equivalences between triangle and range query problems}
\author[1]{Lech Duraj\thanks{Partially supported by the National Science Center, Poland under grant 2016/21/B/ST6/02165.}}
\author[1]{Krzysztof Kleiner}
\author[1]{Adam Polak\thanks{Partially supported by the National Science Center, Poland under grants 2017/27/N/ST6/01334 and 2018/28/T/ST6/00305.}}
\author[2]{Virginia Vassilevska Williams}
\affil[1]{
Institute of Theoretical Computer Science\\
Faculty of Mathematics and Computer Science\\
Jagiellonian University\\
\texttt{\{duraj,polak\}@tcs.uj.edu.pl}, \texttt{krzysztof.kleiner@gmail.com}
}
\affil[2]{MIT CSAIL and EECS\\\texttt{virgi@mit.edu}}
\date{}
\begin{document}

\maketitle

\begin{abstract}
We define a natural class of range query problems, and prove that all problems within this class have the same time complexity (up to polylogarithmic factors). The equivalence is very general, and even applies to online algorithms. This allows us to obtain new improved algorithms for all of the problems in the class.

We then focus on the special case of the problems when the queries are offline and the number of queries is linear. We show that our range query problems are runtime-equivalent (up to polylogarithmic factors) to counting for each edge $e$ in an $m$-edge graph the number of triangles through $e$. This natural triangle problem can be solved using the best known triangle counting algorithm, running in $\Oh{m^{2\omega/(\omega+1)}} \leq \Oh{m^{1.41}}$ time.  Moreover, if $\omega=2$, the $\Oh{m^{2\omega/(\omega+1)}}$ running time is known to be tight (within $m^{o(1)}$ factors) under the $3$SUM Hypothesis. In this case, our equivalence settles the complexity of the range query problems. Our problems constitute the first equivalence class with this peculiar running time bound.

To better understand the complexity of these problems, we also provide a deeper insight into the family of triangle problems, in particular showing black-box reductions between triangle listing and per-edge triangle detection and counting. As a byproduct of our reductions, we obtain a simple triangle listing algorithm matching the state-of-the-art for all regimes of the number of triangles. We also give some not necessarily tight, but still surprising reductions from variants of matrix products, such as the $(\min,\max)$-product.
\end{abstract}

\thispagestyle{empty}
\clearpage
\setcounter{page}{1}

\section{Introduction}
Finding, counting and listing triangles in graphs are fundamental problems with a variety of applications from classical theoretical computer science problems such as subgraph isomorphism to join query problems in databases. 

Since the 1970s~\cite{itairodeh} it has been known that triangle finding and counting can both be solved in $\Oh{n^\omega}$ time in $n$-node graphs, where $\omega<2.373$~\cite{vstoc12,LeGall14} is the exponent of square matrix multiplication. Alon, Yuster and Zwick~\cite{Alon97} improved upon this running time for sparse enough graphs by giving an $\Oh{m^{2\omega/(\omega+1)}} \leq \Oh{m^{1.41}}$ time triangle finding and counting algorithm for $m$-edge graphs. 
This is the best bound for these problems to date.

A \textsc{TriangleListing} algorithm takes as an input a graph $G$ and an integer $t$ and is required to return $t$ triangles in $G$, or all the triangles in $G$ if $G$ has fewer than $t$ triangles. The fastest known algorithms for \textsc{TriangleListing} in $m$-edge, $n$-node graphs run in either $\Ot{n^\omega+n^{3(\omega-1)/(5-\omega)} t^{2(3-\omega)/(5-\omega)}}$ time or in $\Ot{m^{2\omega/(\omega+1)}+m^{3(\omega-1)/(\omega+1)}t^{(3-\omega)/(\omega+1)}\}}$ time\footnote{We use $\Ot{\cdot}$ notation to hide polylogarithmic factors.}, depending on the graph density~\cite{Bjorklund14}. If $\omega=2$, the runtime simplifies to $\Ot{\min\{n^2 +n t^{2/3}, m^{4/3}+m t^{1/3}\}}$, and this running time has been shown to be optimal under the hypothesis that $3$SUM on $n$ integers requires $n^{2-o(1)}$ time~\cite{Patrascu10,Kopelowitz16}.

\textsc{TriangleDetection} and \textsc{TriangleListing} are important problems in graph algorithms and fine-grained complexity. Due to their simplicity, triangle problems can easily be reduced to many other problems. Fine-grained complexity has formulated hypotheses about the complexity of triangle detection and listing, and such hypotheses have been used to show lower bounds for many problems (e.g.~\cite{Patrascu10,AbboudW14}).

\textsc{TriangleDetection}\footnote{\textsc{TriangleDetection} is the problem of detecting whether a given graph has a triangle. \textsc{TriangleFinding} asks to return a triangle contained in a given graph, if one exists, and \textsc{TriangleCounting} asks to return the number of triangles in the given graph. It is not hard to see that \textsc{TriangleDetection} and \textsc{TriangleFinding} are equivalent in terms of running time. }, \textsc{TriangleCounting} and \textsc{TriangleListing} are also powerful primitives on their own. Many problems are known to be reducible to \textsc{TriangleDetection}, e.g.~$k$-{Clique} and more generally {Subgraph Isomorphism} for any fixed size pattern~\cite{nesetrilpoljak} and {Shortest Cycle}~\cite{RodittyW11}. 

\textsc{TriangleDetection} and {Boolean Matrix Multiplication} (\textsc{BMM})~\cite{focsy} were shown to be {\em equivalent} under fine-grained {\em subcubic reductions} so that an $\Oh{n^{3-\eps}}$ time algorithm for one of the problems, for any $\eps>0$, would imply an $\Oh{n^{3-\delta}}$ time algorithm for the other, for some $\delta>0$. Such reductions are tight for ``combinatorial'' algorithms.\footnote{We will not attempt to define ``combinatorial''. The notion is supposedly meant to circumvent the inefficient nature of the Strassen-like algorithms for matrix multiplication.} This relationship between the two problems also implies that many other problems are equivalent (under fine-grained subcubic reductions) to \textsc{TriangleDetection}. Some examples include {Replacement Paths} and {Shortest Cycle} in unweighted graphs~\cite{focsy}.

All these equivalences only go through for dense graphs. When the running time is measured in terms of the number of edges $m$, however, the complexities of the above problems differ a lot from the $\Oh{m^{2\omega/(\omega+1)}}$ time for \textsc{TriangleDetection}, e.g.~Shortest Cycle seems to require $mn^{1-o(1)}$ time~\cite{ancona2019,lincolnsoda18}. Prior to our work no equivalence class has been developed for triangle problems when it comes to sparse inputs. In fact, there are very few such equivalences for sparse graph problems at all: (1) There are a few equivalences for weighted graph problems whose best known running time is $\Oh{mn}$~\cite{AgarwalR18,agv}, and more interestingly, (2) a recent result~\cite{DudekG19} shows that counting $4$-cycles in $m$-edge graphs is equivalent to computing the quartet distance between two $m$-node trees, two problems with runtime $\Oh{m^{1.48}}$.

The main result of this paper is a theorem establishing an {\em equivalence class} of natural problems runtime-equivalent to the following very natural triangle counting variant, still solvable in $\Oh{m^{2\omega/(\omega+1)}}$ time by the Alon-Yuster-Zwick~\cite{Alon97} algorithm.

\begin{definition}[\textsc{EdgeTriangleCounting}] Given an undirected graph $G = (V, E)$, with $n$ nodes and $m$ edges, compute for every edge $e \in E$ the number of triangles in $G$ which contain $e$.
\end{definition}

The problems we consider are certain range query problems known to be solvable, for a linear number of queries, in $\Ot{n^{1.5}}$ time. As a byproduct of their equivalence to \textsc{EdgeTriangleCounting},
we show that they are in fact all solvable in $\Ot{n^{2\omega/(\omega+1)}} \leq \Ot{n^{1.41}}$ time.

The equivalence class is the first about problems with the bizarre complexity $\Ot{n^{2\omega/(\omega+1)}}$.
It turns out that this class has interesting relationships to other problems in fine-grained complexity such as $3$SUM, \textsc{TriangleListing}, and the $(\min,\max)$-product of matrices.

\subsection{Range query problems in our equivalence class}
\label{sec:problems}
Here we define four range query problems that are featured in our equivalence theorem. In Section~\ref{sec:results} we will define a more general range query problem that will generalize all of the problems below and will allow us to significantly extend our equivalence class. Let us define the first four.

The first problem we consider is a problem about counting the number of inversions in a set of given range queries:

\begin{definition}[\textsc{RangeInversionsQuery}]
Given an array of integers $A[1..n]$ and a sequence of ranges $[l_1,r_1],[l_2,r_2],\ldots,[l_q,r_q]$, compute for each range $[l,r]$ the quantity
\[|\{(i, j) : l \leq i < j \leq r \text{ and } A[i] > A[j]\}|.\]
\end{definition}

This is a problem commonly used to illustrate an algorithmic technique, popular under the name of {\em Mo's algorithm} in the competitive programming community, as well as referred to as the Rectilinear Steiner Minimal Arborescence technique~\cite{Kent05}. See Appendix~\ref{app:mo} for details on the technique. 

Mo's technique is very general and achieves a runtime of $\Ot{n\sqrt{q}}$ for many types of range query problems, in particular for all the range query problems in our equivalence class.
For many simple types of queries, however, faster, often (near-)linear time algorithms are known. Examples include sum (folklore), minimum~\cite{Gabow84,Bender00}, or median~\cite{Brodal11}. Counting the number of inversions seems to be one of the simplest examples for which no significant improvement over Mo's algorithm was known prior to our work.

The second problem is a variant of the first one, where we ask about two nonoverlapping ranges instead of one. An inversion is now a pair of elements from different ranges such that the left range element is larger than the right range element.

\begin{definition}[\textsc{2RangeInversionsQuery}]
Given an array of integers $A[1..n]$ and a sequence of pairs of nonoverlapping ranges $([l'_1,r'_1],[l^{''}_1,r^{''}_1]), ([l'_2,r'_2],[l^{''}_2,r^{''}_2]), \ldots, ([l'_q,r'_q],[l^{''}_q,r^{''}_q])$, with $r'_i < l^{''}_i$, compute for each pair $([l',r'],[l^{''},r^{''}])$ the quantity
\[|\{(i, j) : l' \leq i \leq r' \text{ and } l'' \leq j \leq r'' \text{ and } A[i] > A[j]\}|.\]
\end{definition}

In the third and fourth problem instead of inversions we count pairs of equal elements.

\begin{definition}[\textsc{RangeEqPairsQuery}]
Given an array of integers $A[1..n]$ and a sequence of ranges $[l_1,r_1],[l_2,r_2],\ldots,[l_q,r_q]$, compute for each range $[l,r]$ the quantity
\[|\{(i, j) : l \leq i < j \leq r \text{ and } A[i] = A[j]\}|.\]
\end{definition}

\begin{definition}[\textsc{2RangeEqPairsQuery}]
Given an array of integers $A[1..n]$ and a sequence of pairs of nonoverlapping ranges $([l'_1,r'_1],[l^{''}_1,r^{''}_1]), ([l'_2,r'_2],[l^{''}_2,r^{''}_2]), \ldots, ([l'_q,r'_q],[l^{''}_q,r^{''}_q])$, with $r'_i < l^{''}_i$, compute for each pair $([l',r'],[l^{''},r^{''}])$ the quantity
\[|\{(i, j) : l' \leq i \leq r' \text{ and } l'' \leq j \leq r'' \text{ and } A[i] = A[j]\}|.\]
\end{definition}

\subsection{Our results}\label{sec:results}
Now that all those problems have been stated, we present our equivalence theorem.

\begin{theorem}
\label{thm:eqclass}
The problems \textsc{EdgeTriangleCounting} (with input size $m$), \textsc{RangeEqPairsQuery}, \textsc{2RangeEqPairsQuery}, \textsc{RangeInversionsQuery}, \textsc{2RangeInversionsQuery} (with input sizes~$n$, restricted to offline queries and to instances with $q=\Th{n}$) all have the same time complexity in the size of their inputs, up to polylogarithmic factors.
\end{theorem}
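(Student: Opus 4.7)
The plan is to establish a cycle of black-box reductions among the five problems, each preserving input size up to constant factors and costing at most polylogarithmic overhead. First, I would show the four range query problems are pairwise equivalent. Inversions and equal pairs interconvert via the identity $\binom{r-l+1}{2} = \inv_{[l,r]}(A) + \eqp_{[l,r]}(A) + \inv_{[l,r]}(-A)$, so an oracle for any one of the four problems yields the others with at most two calls (on $A$ and on $-A$). To bridge 1-range and 2-range, I would use inclusion-exclusion in one direction and a segment-tree decomposition in the other: on disjoint intervals, the 2-range inversion count equals $\inv([l',r'']) - \inv([l',l''-1]) - \inv([r'+1,r'']) + \inv([r'+1,l''-1])$, giving constant overhead; conversely, a 1-range query $\inv([l,r])$ decomposes along a segment tree into $O(\log n)$ intra-canonical-block inversions (precomputable for all canonical blocks in total time $\Ot{n}$ by a merge-sort style recursion) plus $O(\log^2 n)$ pairwise 2-range contributions between canonical blocks.

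For the easy direction \textsc{EdgeTriangleCounting} $\leq$ \textsc{2RangeEqPairsQuery}, given $G=(V,E)$ with $m$ edges I would fix any ordering of $V$ and form the array $A$ of length $2m$ by concatenating adjacency lists: each vertex $u$ occupies a block $B_u$ of length $\deg(u)$ listing its neighbors. For each edge $(u,v) \in E$, I issue the query $(B_u,B_v)$. An equal pair across $B_u$ and $B_v$ is exactly a choice of a common neighbor $w$ of $u$ and $v$, i.e., a triangle $(u,v,w)$ through $(u,v)$. Both the array length and the query count are $\Theta(m)$, preserving the $q = \Theta(n)$ regime.

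The main obstacle is the reverse direction \textsc{2RangeEqPairsQuery} $\leq$ \textsc{EdgeTriangleCounting}. Given an array with $q = \Theta(n)$ disjoint range pairs, I must build a graph with $\Theta(n)$ edges whose per-edge triangle counts recover every query answer, even though each query encodes a weighted bilinear form $\sum_v c_v(L_k)\,c_v(R_k)$ where $c_v(X)$ is the multiplicity of value $v$ in range $X$. My plan is to first standardize queries by dyadic/segment-tree decomposition of each range into $O(\log n)$ canonical blocks, turning each original query into $O(\log^2 n)$ block-versus-block subqueries whose answers sum to the original; then construct a graph in which each canonical block is a vertex whose neighborhood encodes its multiset of values via auxiliary value-occurrence vertices (so that no multi-edges are needed), and each block-block subquery is represented by a designated edge whose triangle count equals the corresponding equal-pair count. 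The hard part will be making the gadget faithfully encode the multiplicities $c_v(B)$ within the $\Theta(n)$ edge budget: I expect this to require a frequency-threshold split in the spirit of Alon--Yuster--Zwick, handling rare values (small $d_v$) by enumerating explicit pair-gadgets whose total cost is controlled by $\sum_v d_v^2 \leq \tau n$, and handling the at most $n/\tau$ frequent values by a separate compact construction whose triangle counts also cover the per-query contribution; verifying that the two regimes can be simultaneously balanced so that the total edge count stays linear while the triangle counts correctly realize the bilinear form is the most delicate step of the argument.
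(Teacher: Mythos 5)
Your overall architecture (pairwise range-query equivalences plus a two-way bridge to \textsc{EdgeTriangleCounting}) matches the paper's, and your reduction \textsc{EdgeTriangleCounting} $\leq$ \textsc{2RangeEqPairsQuery} via concatenated adjacency lists is exactly the paper's argument. However, there are two concrete gaps.

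\paragraph{Gap 1: inversions from equal pairs.} Your identity $\binom{r-l+1}{2} = \inv(A) + \eqp(A) + \inv(-A)$ correctly lets you compute $\eqp$ from two $\inv$ queries, but it does \emph{not} go the other way. Given an $\eqp$ oracle, you learn $\inv(A)+\inv(-A)$, and querying $-A$ with the $\eqp$ oracle is useless since $\eqp(-A)=\eqp(A)$; you cannot separate $\inv(A)$ from $\inv(-A)$. The paper's reduction \textsc{2RangeInversionsQuery} $\to$ \textsc{2RangeEqPairsQuery} is genuinely different: it creates $\log n$ auxiliary arrays $A_t$ (one per bit position), in which $A_t[i]$ stores the $(t-1)$-bit prefix of $A[i]$ when bit $t$ of $A[i]$ is $1$ (and $-\infty$ otherwise), while $A_t[n+i]$ stores that prefix when bit $t$ is $0$ (and $+\infty$ otherwise). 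Then every inversion contributes to exactly one $\eqp_{A_t}$ query, at the most significant differing bit, so $\invb_A([a,b],[c,d]) = \sum_t \eqpb_{A_t}([a,b],[n+c,n+d])$. Without some such bit-level decomposition your cycle of reductions is broken.

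\paragraph{Gap 2: the hard direction to \textsc{EdgeTriangleCounting}.} Your segment-tree / canonical-block decomposition is the right starting point, and the paper does exactly this: it builds a tripartite graph with one vertex per value, and two copies (one in each of the other two parts) of each segment-tree node, with a node connected to a value as many times as that value occurs in the node's interval; query blocks $(B_1,B_2)$ become edges in the third part, and the per-edge triangle count recovers $\eqp(B_1,B_2)$ because the $c_1\cdot c_2$ triangles through a value with multiplicities $c_1,c_2$ arise from the \emph{parallel} edges. Your proposed fix of ``auxiliary value-occurrence vertices (so that no multi-edges are needed)'' does not work: if you create one vertex per occurrence $(v,p)$ and connect a block to $(v,p)$ whenever $p$ lies in the block, then two disjoint query blocks (as they must be in \textsc{2RangeEqPairsQuery}) share no occurrences, so the triangle count through their connecting edge would be $0$, not $\eqp$. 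Multi-edges are essential to getting the product $c_1 c_2$; the paper then removes them by writing each multiplicity in binary and running the counting algorithm on $\log^2 n$ simple graphs, recombining with weights $2^{i+j}$. Your frequency-threshold split in the spirit of Alon--Yuster--Zwick is also misplaced here: that device belongs inside the triangle-counting \emph{algorithm} (or the direct online algorithm of Section~\ref{sec:alg}), not in a black-box reduction, and the paper's reduction needs no such case analysis -- it produces $\Oh{n\log n + q\log^2 n}$ edges, which is within the polylogarithmic slack the theorem permits (you do not need to hit exactly $\Theta(n)$ edges).

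Your 1-range $\leftrightarrow$ 2-range bridge via segment trees would work, though the paper's route is simpler: precompute all prefix values $f([1,x])$ in $\Ot{n}$, then recover $\inv([a,b])$ from a single 2-range query via $\inv([1,a-1],[a,b]) = \inv([1,b]) - \inv([1,a-1]) - \inv([a,b])$.
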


Due to the equivalence, all these problems are solvable in $\Ot{m^{2\omega/(\omega+1)}}$ time, the time for \textsc{EdgeTriangleCounting}. This presents the first improvement over Mo's algorithm for the range query problems. Moreover, if any of these problems has a faster algorithm, then all of them have an algorithm with the same complexity. It has been open for a long time whether one can improve upon the $\Oh{m^{2\omega/(\omega+1)}}$ \textsc{TriangleDetection} runtime. Any polynomial improvement over $\Oh{n^{2\omega/(\omega+1)}}$ for the range query problems in our equivalence class would resolve this big open problem.

The $q = \Th{n}$ regime is a straightforward choice if we want to study the complexity as a function of a single parameter. The most natural parameter, the input size, equals to $\Th{n+q}$, and it is easy to see that the worst-case asymptotic complexity must be maximized for $q = \Th{n}$. Later we also analyse the complexity as a function of two parameters.

In Sections~\ref{sec:rangequery} and~\ref{sec:triangle} we prove two lemmas which together establish Theorem~\ref{thm:eqclass}.

\begin{restatable}{lemma}{RngTriEquivLem}
\label{lem:rngtri}
If \textsc{2RangeEqPairsQuery} for $q=n$ can be solved offline in $T_=(n)$ time, then \textsc{EdgeTriangleCounting} can be solved in $\Ot{T_=(m)}$ time. Conversely, if \textsc{EdgeTriangleCounting} can be solved in $T_\Delta(m)$ time, then \textsc{2RangeEqPairsQuery} for $q=n$ can be solved offline in $\Ot{T_\Delta(n)}$ time.
\end{restatable}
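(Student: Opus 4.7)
For the first direction, going from \textsc{2RangeEqPairsQuery} to \textsc{EdgeTriangleCounting}, I would use an adjacency-list concatenation reduction. Given an input graph $G=(V,E)$ with $m$ edges, fix an arbitrary ordering on $V$ and let $A$ be the array of length $2m$ obtained by concatenating the adjacency lists of all vertices in that order; each vertex $v$'s neighbors occupy a contiguous block $R_v \subseteq [1,2m]$. For every edge $\{u,v\}\in E$, order the blocks so that $R_u$ precedes $R_v$ and issue the range pair $(R_u,R_v)$. A common neighbor $w\in N(u)\cap N(v)$ contributes exactly one equal pair (its occurrence in $R_u$ matched with its occurrence in $R_v$), so the query returns $|N(u)\cap N(v)|$, the number of triangles through $\{u,v\}$. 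This yields an instance of \textsc{2RangeEqPairsQuery} with array length $2m$ and $m$ queries; after padding with dummy queries to enforce $q=n'$, the total cost is $\Ot{T_=(m)}$.

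For the reverse direction, going from \textsc{EdgeTriangleCounting} to \textsc{2RangeEqPairsQuery}, the plan is to build a balanced segment tree $T$ over $[1..n]$ and decompose each query's two ranges into $O(\log n)$ canonical intervals each. This produces $O(n\log^2 n)$ sub-queries: each asks for $f(u,w):=|\{(i,j)\in u\times w : A[i]=A[j]\}|$ for a pair of canonical intervals $(u,w)$ with $u$ preceding $w$. Since $f(u,w)=\sum_v c^u_v c^w_v$ where $c^u_v = |\{i\in u : A[i]=v\}|$, and the histograms satisfy $\sum_u \|h_u\|_1 = \sum_u |u| = O(n\log n)$, the remaining task is to evaluate these $O(n\log^2 n)$ inner products of sparse histograms. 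After computing all $f(u,w)$, each original query's answer is recovered by summing its $O(\log^2 n)$ sub-query values, using $O(n\log^2 n)$ additional time.

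The core step is a graph construction that encodes these inner products as per-edge triangle counts in a graph with $\Ot{n}$ edges. I would build an auxiliary graph whose vertex set contains two copies $N_u^L, N_u^R$ of every segment-tree node $u$; an ``occurrence'' vertex $\sigma_{i,u}$ for every position $i$ and every segment-tree ancestor $u$ of $i$ (giving $\sum_u |u|=O(n\log n)$ such vertices); and value hubs $V_v$. Edges link each $\sigma_{i,u}$ to its canonical-interval vertices on the left and right sides and to its value hub $V_{A[i]}$, and for every sub-query $(u,w)$ one ``query edge'' $e_{u,w}=N_u^L N_w^R$ is added. The goal is that triangles through $e_{u,w}$ are in bijection with triples witnessing contributions to $f(u,w)$, so that a single call to \textsc{EdgeTriangleCounting} returns every $f(u,w)$ simultaneously, after which we aggregate per original query.

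The hard part is finalising this construction so that three things hold simultaneously: (a) the triangle count through $e_{u,w}$ exactly equals $\sum_v c^u_v c^w_v$ with the correct multiplicity rather than merely the number of values present in both intervals; (b) the auxiliary graph stays sparse ($\Ot{n}$ edges), which rules out both explicit per-pair witness vertices (since there can be $\Omega(n^2)$ equal pairs overall) and direct edges between all positions sharing a value; and (c) no spurious triangles arise when two sub-queries share a canonical interval. Overcoming (a)--(c) together is the crux, and I expect the argument to leverage carefully chosen per-(interval, value) helper vertices so that the $\sum_u |u|=O(n\log n)$ bound is what controls the edge count, while every closed $3$-walk through $e_{u,w}$ corresponds to exactly one pair $(i,j)\in u\times w$ with $A[i]=A[j]$.
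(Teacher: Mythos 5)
Your first direction matches the paper's: concatenate adjacency lists into an array of length $2m$, issue one pair-of-ranges query per edge asking for $\eqp(\mathrm{Nb}(u),\mathrm{Nb}(v))=|N(u)\cap N(v)|$, and pad with dummy queries so $q$ matches the array length. That part is fine.

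In the second direction you have the right skeleton (decompose each queried range into $\Oh{\log n}$ canonical segment-tree intervals, reduce to answering $\Oh{n\log^2 n}$ inner products $\sum_v c^u_v c^w_v$ over pairs of canonical intervals, and aggregate), but the actual graph construction --- which is the entire content of this direction --- is left unfinished, and you say so yourself. Moreover, the partial construction you sketch does not produce any triangles at all: the only neighbors of the query-edge endpoint $N_u^L$ are occurrence vertices $\sigma_{i,u}$ indexed by the interval $u$, and the only neighbors of $N_w^R$ are $\sigma_{j,w}$ indexed by $w$; for distinct canonical intervals $u \neq w$ these sets are disjoint, so no vertex closes a triangle through $e_{u,w}$. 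The occurrence vertices $\sigma_{i,u}$ also inflate the vertex count without serving the counting role you need.

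The missing ingredient, which you hint at but never pin down, is to let the common third vertex be a \emph{value} vertex and to encode occurrence counts as \emph{edge multiplicities}. Build a tripartite multigraph with parts $U$ (one vertex $u_k$ per distinct value $k$), $V$ and $W$ (one vertex each per canonical interval, to serve the left and right ranges). Add an edge $u_k$--$v_{i,j}$ (resp.\ $u_k$--$w_{i,j}$) with multiplicity equal to the number of occurrences of $k$ in the base interval $\base(i,j)$; the total number of such edges is $\sum_{\text{base intervals}}|\text{interval}| = \Oh{n\log n}$. Add one query edge in $E_{VW}$ per pair of canonical intervals arising from the $\Oh{\log^2 n}$ sub-queries of each original query, giving $\Oh{q\log^2 n}$ more edges. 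Then the number of (multi-)triangles through a query edge is exactly $\sum_v c^u_v c^w_v$, with the correct multiplicity. This handles your concerns (a) and (b) simultaneously, and (c) is handled because the graph is tripartite and duplicate query edges can be deduplicated. Finally, since \textsc{EdgeTriangleCounting} is stated for simple graphs, you must eliminate parallel edges: write each multiplicity in binary, and for every pair $(i,j)\in[\log n]^2$ form a simple graph keeping the $E_{UV}$-edges whose multiplicity has a $1$ in bit $i$ and the $E_{UW}$-edges with a $1$ in bit $j$; recombine with weights $2^{i+j}$. This costs a $\log^2 n$ overhead, absorbed into $\Ot{\cdot}$. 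Without both the multigraph-with-value-vertices idea and the binary decomposition, the reduction does not go through, so the proposal as written has a genuine gap.
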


\begin{restatable}{lemma}{EquivClassLem}
\label{lem:eqclass}
The problems \textsc{RangeEqPairsQuery}, \textsc{2RangeEqPairsQuery}, \textsc{RangeInversionsQuery}, \textsc{2RangeInversionsQuery} all have the same time complexity, up to polylogarithmic factors. This holds even when the queries are presented online and with the complexity measured as a function of two variables, $n$ and $q$.
\end{restatable}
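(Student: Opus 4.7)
The plan is to close a cycle of reductions among the four problems, each with polylogarithmic overhead. I will establish (A) \textsc{2RangeX} $\to$ \textsc{RangeX} and (B) \textsc{RangeX} $\to$ \textsc{2RangeX} for $X \in \{\mathrm{Inv}, \mathrm{Eq}\}$, together with (C) \textsc{2RangeEq} $\to$ \textsc{2RangeInv} and (D) \textsc{2RangeInv} $\to$ \textsc{2RangeEq}; together these yield all pairwise equivalences. Steps (A) and (C) are inclusion--exclusion. For (A), if $X = [l',r']$ and $Y = [l'',r'']$ are separated by the gap $G = [r'+1, l''-1]$, expanding $\inv$ over the three-way split $[l', r''] = X \cup G \cup Y$ gives
\[\text{Cross}(X, Y) = \inv([l', r'']) - \inv([l', l''-1]) - \inv([r'+1, r'']) + \inv(G),\]
and the same identity holds with $\eqp$ in place of $\inv$, so four 1-range queries suffice per 2-range query. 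For (C), the trichotomy $A[i] < A[j]$, $A[i] = A[j]$, $A[i] > A[j]$ implies that the equal-pair count equals the total number of position pairs minus the answers to two 2RangeInv queries, one on $A$ and one on its negation.

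For (B), build a segment tree over $[1, n]$. Each canonical interval $I$ with children $I_L, I_R$ satisfies $\inv(I) = \inv(I_L) + \inv(I_R) + \text{Cross}(I_L, I_R)$ (and the analogue for $\eqp$), so the within-counts of all $O(n)$ canonical intervals can be precomputed bottom-up using $O(n)$ 2-range queries on sibling pairs. A query $[l, r]$ decomposes into $O(\log n)$ canonical intervals whose within-counts are already known, plus $O(\log^2 n)$ pairwise cross-counts answered by fresh 2-range queries. The precomputation depends only on the input array, so the online property is preserved.

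The main obstacle is (D). After remapping values to $[0, n)$ by rank, I will use the identity
\[\bbone[A[i] > A[j]] = \sum_{b} \bbone\!\left[\lfloor A[i]/2^{b+1} \rfloor = \lfloor A[j]/2^{b+1} \rfloor\right] \cdot \bbone[\mathrm{bit}_b A[i] = 1] \cdot \bbone[\mathrm{bit}_b A[j] = 0],\]
which reads off the most significant bit on which $A[i]$ and $A[j]$ differ. For each $b$, build an auxiliary array $D^{(b)}$ of length $2n$: for $1 \le i \le n$, set $D^{(b)}[i] := \lfloor A[i]/2^{b+1} \rfloor$ when $\mathrm{bit}_b A[i] = 1$ and a position-unique dummy otherwise; in the second half, symmetrically set $D^{(b)}[n+j] := \lfloor A[j]/2^{b+1} \rfloor$ when $\mathrm{bit}_b A[j] = 0$ and a unique dummy otherwise. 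Each equal pair across the two halves of $D^{(b)}$ corresponds bijectively to a pair contributing to the $b$-th summand, so a 2RangeEq query over $([l', r'], [l''+n, r''+n])$ returns that summand exactly. Concatenating the $O(\log n)$ arrays $D^{(b)}$ with disjoint value spaces per bit yields a single 2RangeEq instance of size $O(n \log n)$ in which each 2RangeInv query spawns $O(\log n)$ 2RangeEq queries. Since every reduction dispatches only $\polylog(n)$ target queries per source query and inflates $n$ by at most a polylog factor, both the online formulation and the joint $(n, q)$-complexity statement are preserved up to polylogarithmic factors.
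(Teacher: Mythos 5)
Your steps (A), (C), and (D) coincide with the paper's: (A) is the same inclusion--exclusion over prefixes, (C) the same trichotomy trick with $A$ and $-A$, and (D) the same MSB-of-the-XOR decomposition (the paper writes $p_{t-1}(x)=\lfloor x/2^{k-t+1}\rfloor$ with $\pm\infty$ placeholders; you write $\lfloor x/2^{b+1}\rfloor$ with position-unique dummies --- equivalent choices). The one place you genuinely diverge is step (B), and that is where there is a gap with respect to the two-parameter/online claim.

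In (B) you propose to precompute the within-counts of all $\Oh{n}$ canonical segment-tree intervals ``using $\Oh{n}$ 2-range queries on sibling pairs.'' That turns $q$ source queries into $\Th{n} + \Oh{q\log^2 n}$ target queries. When $q \ll n$ this is not a polylog inflation of $q$: for instance, with $q=1$ you would pay $T_{\mathrm{2Range}}(\Oh{n}, \Oh{n})$ in preprocessing, which (e.g.\ at the $\tilde{\Oh}{(n^{2\omega/(\omega+1)})}$ level of Theorem~\ref{thm:algorithm}) can be polynomially larger than $T_{\mathrm{2Range}}(n,1)=\tilde{\Oh}{(n)}$. So as written the reduction proves the lemma for $q = \tilde{\Om}{(n)}$ but not as a function of both $n$ and $q$, contradicting your closing sentence that ``every reduction dispatches only $\polylog(n)$ target queries per source query.'' The fix is to do the preprocessing without the oracle: either compute the per-node within-counts directly bottom-up by merge-counting in $\Oh{n\log^2 n}$ total time, or (as the paper does, and which is simpler) skip the segment tree entirely, precompute the prefix values $\inv([1,x])$ for all $x$ in $\Oh{n\log n}$ time with a balanced BST, and answer each query $[a,b]$ via $\inv([a,b]) = \inv([1,b]) - \inv([1,a-1]) - \inv([1,a-1],[a,b])$, i.e.\ exactly one 2-range oracle call per source query plus two table lookups. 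With that change your (B) is sound and the cycle of reductions closes, preserving $(n,q)$ up to polylog factors.
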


Since Lemma~\ref{lem:eqclass} holds for online algorithms as well, it is quite powerful.
In fact, we use it to develop improved algorithms for all of these range query problems, which work also in the online setting. In Section~\ref{sec:alg} we present an {\em online} algorithm for \textsc{RangeEqPairsQuery}, which by Lemma~\ref{lem:eqclass} also implies the same online running time for the rest of the problems. 

\begin{restatable}{theorem}{AlgThm}
\label{thm:algorithm}
\textsc{RangeEqPairsQuery} can be solved online in time
\[T(n, q) = \begin{cases}
\Ot{nq^{\frac{\omega - 1}{\omega + 1}}} &\mbox{if } q \leq n \\
\Ot{n^{\frac{2\omega - 2}{\omega + 1}}q^{\frac{2}{\omega + 1}}} &\mbox{if } q > n.\end{cases}\]
\end{restatable}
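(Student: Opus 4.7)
The plan is to combine an Alon--Yuster--Zwick style heavy/light split on values with a block decomposition of the array, and to tune the threshold $\tau$ and block size $B$ separately in the two regimes $q\le n$ and $q>n$. Partition $A[1..n]$ into $b=n/B$ consecutive blocks of length $B$, and call a value $v$ \emph{heavy} if it occurs at least $\tau$ times in $A$ (there are at most $n/\tau$ such values) and \emph{light} otherwise.

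The preprocessing builds three structures. First, a single linear scan computes the within-block equal-pair counts $E(1),\dots,E(b)$ and the sorted occurrence list $S_v$ of each value $v$. Second, I compute the cross matrix $C$ of size $b\times b$, where $C(i,j)$ counts pairs $(p,q)$ with $p<q$ that lie in blocks $i$ and $j$ respectively with $A[p]=A[q]$. The \emph{light} contribution is obtained by enumerating, for every light $v$, all $\binom{|S_v|}{2}$ same-value pairs and incrementing the appropriate $C(i,j)$; since $\sum_{\text{light}\,v} c_v^2 \le \tau n$, this costs $\Oh{n\tau}$. The \emph{heavy} contribution equals $M_H M_H^\top$, where $M_H$ is the $b\times(n/\tau)$ matrix with $M_H[i,v]=c_v(\text{block } i)$; using (rectangular) fast matrix multiplication, this takes time $\mathrm{MM}(b,n/\tau,b)$, which is $\Oh{(n/\tau)\,b^{\omega-1}}$ when $n/\tau\ge b$ and $\Oh{b^2(n/\tau)^{\omega-2}}$ (by tiling $M_H$ into square blocks of side $n/\tau$ and invoking square MM) when $n/\tau\le b$. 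Third, I build a 2D prefix sum of $C$ and a 1D prefix sum of $E$ in time $\Oh{b^2}$, so that for any block-aligned range $[aB{+}1,bB]$ the inner contribution $\sum_{a<i\le b}E(i)+\sum_{a<i<j\le b}C(i,j)$ is obtainable in $\Oh{1}$.

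A query $[l,r]$ is answered online by splitting it into an inner block-aligned range and at most two fringes of length $\le B$ each. The inner part is read off the prefix sums. Pairs inside a single fringe, and pairs across the two fringes, are handled directly with a hash table in $\Oh{B}$ time. Cross-pairs between a fringe element $A[k]$ and the inner range are obtained by binary search in $S_{A[k]}$ to compute $|S_{A[k]}\cap[aB{+}1,bB]|$, giving $\Ot{B}$ total per query.

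Summing up, the total running time is $\Ot{n\tau+\mathrm{MM}(b,n/\tau,b)+b^2+qB}$, and the remaining work is parameter tuning. In the regime $q\le n$, setting $B=n/q^{2/(\omega+1)}$ and $\tau=(n/B)^{(\omega-1)/2}$ places us in the case $n/\tau\ge b$ and makes each of $n\tau$, $(n/\tau)\,b^{\omega-1}$, and $qB$ equal to $nq^{(\omega-1)/(\omega+1)}$, with the $b^2$ term dominated. In the regime $q>n$, I use instead $B=n^{(2\omega-2)/(\omega+1)}/q^{(\omega-1)/(\omega+1)}$, which puts us in the case $n/\tau\le b$ and makes all four terms match the claimed $n^{(2\omega-2)/(\omega+1)}q^{2/(\omega+1)}$. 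The main obstacle is precisely this balancing---in particular checking the regime switch in the rectangular matrix multiplication bound and confirming that $b^2$ stays dominated---and, if $q$ is not known in advance, recovering the same bound by the standard doubling trick at an $\Oh{\log q}$ overhead absorbed into the $\Ot{\cdot}$.
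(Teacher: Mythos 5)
Your proposal is correct and follows essentially the same route as the paper's proof: a block decomposition of the array together with a heavy/light (the paper says frequent/rare) threshold on values, rectangular matrix multiplication for the heavy cross-block matrix, brute-force enumeration for the light pairs, 2D prefix sums for block-aligned subqueries, and $\Ot{B}$ fringe handling per query; your parameters $B,\tau$ correspond exactly to the paper's $n^{1-\beta}, n^{1-\gamma}$, and your bound on $\mathrm{MM}(b,n/\tau,b)$ is the same block-tiling bound $\omega(a,b,c)\le a+b+c-(3-\omega)\min(a,b,c)$. The parameter choices you give are the correct ones (your $q>n$ formula in fact corresponds to what the paper's formula for $\beta$ should read — the paper's stated $\beta=\frac{3-\alpha+\omega(\alpha+1)}{\omega+1}$ has a sign typo and should be $\frac{3-\omega+\alpha(\omega-1)}{\omega+1}$, which is what your choice of $B$ yields).
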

Notice that, for $q$ significantly different than $n$ our algorithm improves over the bound of $\Ot{(n+q)^{2\omega/(\omega+1)}}$ following trivially from Theorem~\ref{thm:eqclass}.

In Section~\ref{sec:alg} we compare this running time against a multivariate $3$SUM lower bound, and conclude that, assuming $\omega=2$, the bounds are tight for $q \geq n$, but a gap remains for $q<n$.

The range query problems from our equivalence class are instances of two more general problems that we will now define.
For a binary integer function $f:\Z^2\to\Z$, and an array $A$ clear from context, let us abuse the notation and write
\[f([l,r]) \eqdef \sum_{l\leq i<j\leq r} f(A[i],A[j]),
\quad \text{and} \quad f([l',r'],[l'',r'']) \eqdef
\sum_{\substack{l' \leq i \leq r' \\ l'' \leq j \leq r''}} f(A[i],A[j]).\]
This lets us define two (schemes of) problems.

\begin{definition}[\textsc{Range-$f$-PairsQuery}]
Given an array of integers $A[1..n]$ and a sequence of ranges $[l_1,r_1],[l_2,r_2],\ldots,[l_q,r_q]$, compute $f([l_i,r_i])$ for each $i\in[q]$.
\end{definition}

\begin{definition}[\textsc{2Range-$f$-PairsQuery}]
Given an array of integers $A[1..n]$ and a sequence of pairs of nonoverlapping ranges $([l'_1,r'_1],[l^{''}_1,r^{''}_1]), ([l'_2,r'_2],[l^{''}_2,r^{''}_2]), \ldots, ([l'_q,r'_q],[l^{''}_q,r^{''}_q])$, compute $f([l'_i,r'_i],[l^{''}_i,r^{''}_i])$ for each $i\in[q]$.
\end{definition}

Note that our initial four range query problems are instantiations of the above schemes for functions \[\inv(x,y) \eqdef \begin{cases}1 &\mbox{if } x > y, \\ 0 &\mbox{otherwise},\end{cases}
  \quad \text{and} \quad
  \eqp(x,y) \eqdef \begin{cases}1 &\mbox{if } x = y, \\ 0 &\mbox{otherwise}.\end{cases}\]
A natural question is: {\em What other functions yield range query problems with the same time complexity?} Labib, Uzna\'nski and Wolleb-Graf~\cite{Labib19} investigate functions equivalent to Hamming distance in the context of convolutions and matrix products. They come up with a helpful definition.

\begin{definition}[Labib, Uzna\'nski, Wolleb-Graf~\cite{Labib19}]
For integers $A,B,C$ and polynomial $P(x,y)$ we say that the function $P(x,y) \cdot \bbone[A x + B y + C > 0]$ is \emph{halfplane polynomial}.
We call a sum of halfplane polynomial functions $\sum_i P_i(x,y) \cdot \bbone[A_i x + B_i y + C_i > 0]$ \emph{piecewise polynomial}.

We say that a function is \emph{axis-orthogonal piecewise polynomial}, if it is piecewise polynomial and for every $i$, $A_i = 0$ or $B_i = 0$.
\end{definition}
Note that both $\inv$ and $\eqp$ are non-axis-orthogonal piecewise polynomial, and also many other natural functions fall within the definition. Examples include $\max(x,y)$, the $L_1$ distance $|x-y|$, more generally any odd $L_{2p+1}$ distance, the threshold function $\bbone[|x-y| < \delta]$, or the rectifier function $\max(0, x - y)$.

In Section~\ref{sec:rangequery} we integrate their techniques and vastly expand the equivalence class introduced in Theorem~\ref{thm:eqclass}.

\begin{restatable}{theorem}{fEquivClassThm}
\label{thm:feqclass}
Let $f:\Z^2\to\Z$ be any non-axis-orthogonal piecewise polynomial function of constant degree and $\mathrm{\polylog}(n)$ number of summands. For input values bounded in absolute value by $\mathrm{poly}(n)$, the problems \textsc{Range-$f$-PairsQuery} and \textsc{2Range-$f$-PairsQuery} have the same time complexity, up to polylogarithmic factors, as \textsc{2RangeEqPairsQuery}. Hence, for $q=n$, offline \textsc{Range-$f$-PairsQuery} and offline \textsc{2Range-$f$-PairsQuery} have the same time complexity, up to polylogarithmic factors, as \textsc{EdgeTriangleCounting}.
\end{restatable}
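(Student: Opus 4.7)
\emph{Plan.} The strategy is to establish two reductions, each with polylogarithmic overhead: \textsc{Range-$f$-PairsQuery} and \textsc{2Range-$f$-PairsQuery} both reduce to \textsc{2RangeEqPairsQuery}, and conversely \textsc{2RangeEqPairsQuery} reduces to \textsc{2Range-$f$-PairsQuery}. Combined with Lemmas~\ref{lem:rngtri} and~\ref{lem:eqclass}, this places the new schemes in the equivalence class of Theorem~\ref{thm:eqclass}.

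For the direction from $f$-queries to equality queries, I first reduce \textsc{Range-$f$-PairsQuery} to \textsc{2Range-$f$-PairsQuery} by a standard segment-tree decomposition: each one-range query splits into $O(\log n)$ within-segment sums (processed level by level, each level being a smaller recursive instance on disjoint sub-arrays) plus $O(\log^2 n)$ cross-segment two-range queries. Next, in \textsc{2Range-$f$-PairsQuery} itself, I decompose $f$ linearly over its $\mathrm{polylog}(n)$ halfplane summands; each summand $P_i(x,y)\,\bbone[L_i(x,y)>0]$ expands into $O(1)$ monomials $c_{ab}x^a y^b$, and each integer weight $A[i']^a A[j']^b$ — bounded by $\mathrm{poly}(n)$ — is decomposed bitwise into $O(\log^2 n)$ $\{0,1\}$-weighted sub-queries, with ``excluded'' entries overwritten by $\pm\infty$ sentinels that never satisfy the inversion predicate. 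Non-axis-orthogonality of $L_i$ lets me pick affine maps $u,v$ with $\bbone[L_i(u(x),v(y))>0]=\bbone[x>y]$, and because the two query ranges are always disjoint I can place $u(A[\cdot])$ at positions $1..n$ and $v(A[\cdot])$ at positions $n{+}1..2n$ of an auxiliary length-$2n$ array, turning each sub-query into an instance of \textsc{2RangeInversionsQuery}. A final application of Lemma~\ref{lem:eqclass} converts these into \textsc{2RangeEqPairsQuery} instances.

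For the reverse direction, using $\bbone[x=y] = 1 - \bbone[x>y] - \bbone[y>x]$ it suffices to reduce \textsc{2RangeInversionsQuery} to \textsc{2Range-$f$-PairsQuery}. I fix one non-axis-orthogonal summand $P^*\,\bbone[L^*>0]$ of $f$ and choose affine maps $u,v$ so that $L^*(u(x),v(y))$ is a positive multiple of $x-y$; the defining equations for $u,v$ leave one free parameter, a global shift, which I tune so that every other summand's halfplane $L_j$ not parallel to $L^*$ has a constant sign on $L_j(u(x),v(y))$ across all $\mathrm{poly}(n)$-bounded inputs (summands with $L_j$ parallel to $L^*$ get absorbed into the $\bbone[x>y]$ part, and summands with axis-orthogonal $L_j$ are handled separately by splitting ranges along the value threshold and using prefix sums of powers). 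Under this substitution the $f$-oracle returns $\sum_{(i,j)} \tilde P(A[i],A[j])\,\bbone[A[i]>A[j]] + \sum_{(i,j)} R(A[i],A[j])$, where $\tilde P, R$ are explicit constant-degree polynomials depending only on $f$ and $(u,v)$; the term $\sum R$ is computable in $O(n)$ time from prefix sums of powers of $A$, and the weighted inversion count is converted to the unweighted one by invoking the oracle with $O(1)$ different choices of the free shift and inverting the resulting small linear system in the coefficients of $\tilde P$.

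\emph{Main obstacle.} The sign-control step in the reverse direction is the delicate part: one must choose the shift so that $\mathrm{polylog}(n)$ non-parallel halfplanes are simultaneously pushed into prescribed sign regions, while keeping the transformed values integer and within $\mathrm{poly}(n)$ bound so the blow-up stays polynomial. This is a linear feasibility problem whose coefficients are constants derived from $f$ and whose feasible shift sets are cones in a two-dimensional parameter space; I resolve it by a Farkas-style argument applied region by region in the arrangement of the $L_j$'s, repeating the whole reduction once per region (of which there are $\mathrm{polylog}(n)$) and summing the contributions. The total multiplicative overhead is $\mathrm{polylog}(n)$, matching the claimed equivalence.
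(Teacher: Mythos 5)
Your plan diverges from the paper's proof in an interesting but risky way. The paper does not re-derive the algebraic decomposition machinery; it cites Theorems 10 and 11 of Labib, Uzna\'nski and Wolleb-Graf~\cite{Labib19} as black boxes. Theorem 10 gives $f(x,y)=\sum_{i=1}^{k}\alpha_i\,\eqp(g_i(x),h_i(y))$ with $k=\polylog(n)$, and Theorem 11 gives the reverse expansion $\eqp(x,y)=\sum_i\alpha_i f_i(g_i(x),h_i(y))$ with each $f_i\in\{f,\mathop{mul}\}$. Given these, the paper's reductions are near-trivial: build the arrays $A_i[j]=g_i(A[j])$, $A_i[n+j]=h_i(A[j])$ and sum. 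The \textsc{Range}~$\leftrightarrow$~\textsc{2Range} step is the simple inclusion--exclusion plus a linear-time prefix pass, not your segment-tree decomposition (which would also work, but is heavier). Your forward direction --- bit-slicing the monomial weights, using $\pm\infty$ sentinels, and mapping each non-axis-orthogonal halfplane to $\bbone[x>y]$ via affine $u,v$, then passing through \textsc{2RangeInversionsQuery} and Lemma~\ref{lem:eqclass} --- is in effect a re-proof of Theorem~10 and is plausible, though you should note that axis-orthogonal summands of $f$ (which are allowed even when $f$ as a whole is non-axis-orthogonal) must be handled by the sentinel trick alone, not by affine remapping.

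The genuine gap is in your reverse direction, which is exactly where the paper leans on Labib et al.'s Theorem 11. Two specific steps are not established. First, the claim that ``summands with $L_j$ parallel to $L^*$ get absorbed into the $\bbone[x>y]$ part'' is false in general: after your substitution such a summand contributes $\bbone\!\left[c(x-y)+\mu_j>0\right]$, which over the integers is $\bbone[x>y]$ only when $-\mu_j/c\in[0,1)$; when $\mu_j>0$ it becomes $\bbone[x\geq y]=\bbone[x>y]+\eqp(x,y)$, re-introducing the very quantity you are trying to compute, and when $|\mu_j/c|\geq 1$ it is a genuinely shifted threshold that does not coincide with $\bbone[x>y]$ at all. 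Second, the claim that ``invoking the oracle with $O(1)$ different choices of the free shift and inverting the resulting small linear system'' recovers the unweighted count is asserted, not proved: the system's matrix has entries that are fixed constants determined by $f$, and its determinant can vanish (e.g., if the recovered coefficients satisfy $\alpha=\pm\beta$ in a $2\times 2$ subsystem). Showing that some polylogarithmic family of substitutions always yields an invertible system, while keeping the arrays integer-valued and $\mathrm{poly}(n)$-bounded, is precisely the nontrivial content of Labib et al.'s Theorem~11; the ``Farkas-style argument applied region by region'' is a placeholder for that missing proof. Until those two steps are filled in, the reduction from \textsc{2RangeEqPairsQuery} to \textsc{2Range-$f$-PairsQuery} is not established.
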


Thus \textsc{Range-$f$-PairsQuery} and \textsc{2Range-$f$-PairsQuery} (for arbitrary $f$) are equivalent to \textsc{2Range-$f$-PairsQuery} for the specific $f$ which is equality, even when the queries are presented online and with the complexity measured as a function of two variables, $n$ and $q$. This significantly extends our equivalence class.

Fine-grained complexity gives conditional lower bounds for all the problems in our expanded equivalence class. Techniques initially developed by P\v{a}tra\c{s}cu~\cite{Patrascu10} for \textsc{TriangleListing}, and further advanced by Kopelowitz, Pettie and Porat~\cite{Kopelowitz16}, can also prove that an $\Oh{m^{4/3-\eps}}$ time algorithm for $\eps>0$ for \textsc{EdgeTriangleCounting} would break the $3$SUM Hypothesis.\footnote{For a reader unassured by this hand-waving argument, let us note that the same lower bounds follow from our reductions from \textsc{TriangleListing} and \textsc{TriangleDetection}, which we shall discuss later in the paper.} Thus, we immediately obtain that under the $3$SUM Hypothesis, all problems in our equivalence class require $n^{4/3-o(1)}$ time. Thus if $\omega=2$, then we have a class whose time complexity is squarely $n^{4/3\pm o(1)}$ (under the $3$SUM Hypothesis). Moreover, improving over either the current upper bound or the current lower bound for our class by a polynomial factor would result in a significant breakthrough: Namely, an $\Oh{n^{2\omega/(\omega+1)-\eps}}$ time algorithm for $\eps>0$ would either refute the $3$SUM Hypothesis, or if the $3$SUM Hypothesis is true, then it would show that $\omega>2$. On the other hand, if one can give an $n^{4/3+\eps-o(1)}$ lower bound, then it must be that $\omega>2$.

We relate our equivalence class to two other problems of interest: \textsc{TriangleListing} and the $(\min,\max)$ matrix product. See Figure~\ref{fig:results} for an overview of the complexity landscape mapped by our results.

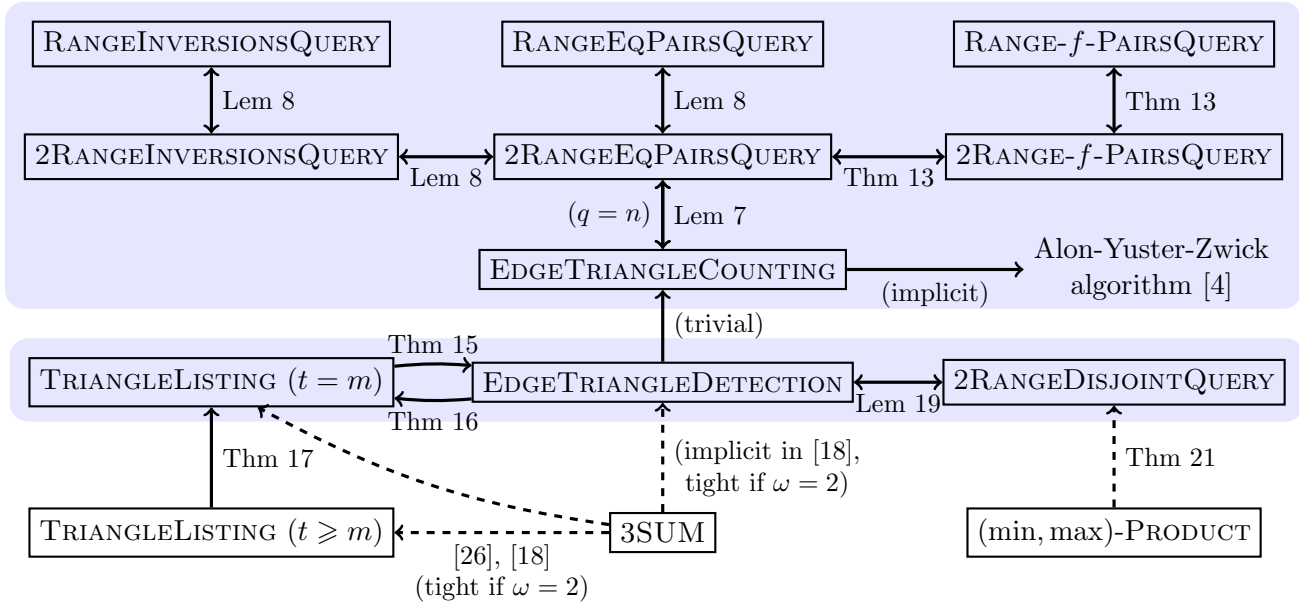
\begin{figure}
\centering
\begin{tikzpicture}
\begin{scope}[every node/.style={rectangle,thick,draw}]
  \node (2riq) at (0,6) {\textsc{2RangeInversionsQuery}};
  \node (riq)  at (0,7.5) {\textsc{RangeInversionsQuery}};
  \node (2req) at (6,6) {\textsc{2RangeEqPairsQuery}};
  \node (req)  at (6,7.5) {\textsc{RangeEqPairsQuery}};
  \node (2rfq) at (12,6) {\textsc{2Range-$f$-PairsQuery}};
  \node (rfq)  at (12,7.5) {\textsc{Range-$f$-PairsQuery}};
  \node (tcn)  at (6,4.5) {\textsc{EdgeTriangleCounting}};
  \node (tdt)  at (6,3) {\textsc{EdgeTriangleDetection}};
  \node (tlm)  at (0,3) {\textsc{TriangleListing} ($t=m$)};
  \node (tlt)  at (0,1) {\textsc{TriangleListing} ($t \geq m$)};
  \node (2rdq)  at (12,3) {\textsc{2RangeDisjointQuery}};
  \node (mmp)  at (12,1) {$(\min,\max)$-\textsc{Product}};
  \node (3sum) at (6,1) {\textsc{3SUM}};
\end{scope}
\begin{scope}[every node/.style={}]
  \node (alg) at (12.5,4.5) [align=center] {Alon-Yuster-Zwick \\ algorithm~\cite{Alon97}};
\end{scope}
\begin{scope}[every node/.style={font=\small},every edge/.style={draw=black,very thick}]
  \path [<->] (2riq) edge node[right] {Lem~\ref{lem:eqclass}} (riq);
  \path [<->] (2req) edge node[right] {Lem~\ref{lem:eqclass}} (req);
  \path [<->] (2rfq) edge node[right] {Thm~\ref{thm:feqclass}} (rfq);
  \path [<->] (2riq) edge node[below] {Lem~\ref{lem:eqclass}} (2req);
  \path [<->] (2req) edge node[below] {Thm~\ref{thm:feqclass}} (2rfq);
  \path [<->]  (tcn) edge node[right] {Lem~\ref{lem:rngtri}} (2req);
  \path [<->]  (tcn) edge node[left] {($q=n$)} (2req);
  \path [->] (tdt) edge node[right] {(trivial)} (tcn);
  \path [<-] (tlm) edge[bend right=5] node[below] {Thm~\ref{thm:detlst}} (tdt);
  \path [->] (tlm) edge[bend left=5] node[above] {Thm~\ref{thm:lstdet}} (tdt);
  \path [->] (tlt) edge node[right] {Thm~\ref{thm:lstlst}} (tlm);
  \path [->] (tcn) edge node[below] {(implicit)} (alg);
  \path [<->] (tdt) edge node[below] {Lem~\ref{lem:boolrngtri}} (2rdq);
\end{scope}
\begin{scope}[every node/.style={font=\small},every edge/.style={draw=black,very thick,dashed}]
  \path [->] (3sum) edge node[right,pos=0.4] [align=center] {(implicit in~\cite{Kopelowitz16},\\tight if $\omega=2$)} (tdt);
  \path [->] (mmp) edge node[right] {Thm~\ref{thm:minmax}} (2rdq);
  \path [->] (3sum) edge[bend left=10] (tlm);
  \path [->] (3sum) edge node[below] [align=center] {\cite{Patrascu10}, \cite{Kopelowitz16} \\ (tight if $\omega=2$)} (tlt);
\end{scope}
\begin{pgfonlayer}{background}
  \filldraw [line width=5mm,join=round,blue!10]
    (riq.north  -| 2riq.west)  rectangle (tcn.south  -| 2rfq.east)
    (tlm.north -| tlm.west) rectangle (tdt.south -| 2rdq.east);
\end{pgfonlayer}
\end{tikzpicture}
\caption{Problems, reductions and equivalence classes considered in this paper. Dashed arrows represent reductions which are not tight with respect to current fastest algorithms.}
\label{fig:results}
\end{figure}

\paragraph{Relationship to triangle listing.}
\textsc{TriangleListing} is arguably the most widely studied output-intensive triangle problem in sparse graphs~\cite{Schank05,Patrascu10,Bjorklund14}.
Quite surprisingly, in Section~\ref{sec:listing} we show that it is equivalent, in the $t=m$ regime, to the following problem, which trivially reduces to \textsc{EdgeTriangleCounting}. The reductions in that section are the first such tight reductions between triangle listing and detection problems.

\begin{definition}[\textsc{EdgeTriangleDetection}]
Given an undirected graph $G = (V, E)$, with $n$ nodes and $m$ edges, determine for every edge $e \in E$ if there exists a triangle in $G$ which contains $e$.
\end{definition}

\begin{restatable}{theorem}{LstToDetThm}
\label{thm:lstdet}
If \textsc{EdgeTriangleDetection} can be solved in $T(m)$ time, then \textsc{TriangleListing} for $t=m$ can be solved in $\Ot{T(m)}$ time.
\end{restatable}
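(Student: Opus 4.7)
My plan is to reduce \textsc{TriangleListing} with $t=m$ to $\polylog(n)$ invocations of \textsc{EdgeTriangleDetection} on subgraphs of $G$ of size $O(m)$. The reduction combines two ingredients: a batched witness-finder and an iterative peeling loop.

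The witness-finder takes a graph $H$ together with the set $E^\Delta(H)$ of its edges that participate in some triangle (produced by one \textsc{EdgeTriangleDetection} call) and returns, for each $(u,v)\in E^\Delta(H)$, a common neighbor $w$ witnessing a triangle. This is the analogue of witness extraction for Boolean matrix multiplication, and it can be realized with $O(\log^2 n)$ further \textsc{EdgeTriangleDetection} calls on auxiliary tripartite subgraphs in which the candidate third vertex is restricted to subsets $S\subseteq V$ of geometrically shrinking size; Alon--Naor-style random sampling localizes unique witnesses, and recursive halving pinpoints $w$.

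The peeling loop proceeds in rounds. In round $i$ I invoke the witness-finder on the current graph $G_i$ to produce a triangle through every edge of $E^\Delta_i$; after deduplication I obtain a family $T_i$ of at least $|E^\Delta_i|/3$ distinct triangles (since each triangle is pointed to by at most three edges); I output $T_i$; then I delete one designated edge of each triangle in $T_i$ to form $G_{i+1}$. The loop terminates once the cumulative output reaches $m$ or $E^\Delta$ becomes empty. Because each round deletes at least $|E^\Delta_i|/3$ edges from $E^\Delta_i$, we have $|E^\Delta_{i+1}|\le\tfrac{2}{3}|E^\Delta_i|$, so $O(\log m)$ rounds suffice and the total cost is $\polylog(n)\cdot T(m)=\Ot{T(m)}$.

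The main obstacle will be certifying that the algorithm produces the required $\min(m,\#\text{triangles})$ distinct outputs and not merely $\Theta(|E^\Delta_0|)$ of them. Deleting one edge of each output triangle can collaterally destroy other unreported triangles sharing that edge, so the raw peeling scheme may fall short when many triangles concentrate on few edges (for instance, a dense $K_k$ plastered onto $G$). I expect to close this gap either by choosing the deleted edges via a matching-style rule that attaches each output triangle to a ``private'' edge (isolating its destruction from the other triangles), or---when $|E^\Delta_i|$ has dropped well below the remaining output quota---by re-invoking the witness-finder with previously used witnesses excluded to harvest several distinct triangles per edge within the same $\polylog$ detection budget. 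Balancing those extractions against the $\polylog(n)$ detection-call count is the most delicate part of the proof.
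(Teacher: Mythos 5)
Your plan takes a genuinely different route from the paper's, and it has a real gap that you yourself flag but do not close. The peeling loop produces at least $|E^\Delta_i|/3$ distinct triangles per round and removes at least that many edges from $E^\Delta_i$, so the total number of triangles you ever output is bounded above by $|E^\Delta_0|$ and below only by roughly $|E^\Delta_0|/3$: deleting the designated edges can collaterally evict other edges from $E^\Delta$ (because they lose their only remaining triangle) faster than you earn credit for them, so in the worst case the loop terminates having listed $\Theta(|E^\Delta_0|/3)$ triangles even though the graph originally had far more than $m$ of them (your $K_k$ example). Neither of the two proposed fixes is worked out. A matching-style rule assigning each output triangle a ``private'' deleted edge still does not control how many \emph{unreported} triangles a single deletion kills. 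And ``re-invoking the witness-finder with previously used witnesses excluded'' cannot in general harvest several distinct triangles per edge within a $\polylog$ budget, because an edge with $k$ witnesses would need $\Omega(k)$ rounds of exclusion to exhaust them and $k$ can be linear in $n$. The Alon--Naor-style witness extraction you sketch is plausible but itself needs a careful argument (you have no counting oracle, only detection, so the geometric-sampling bookkeeping must be spelled out), and it is the weaker of the two issues.

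The paper avoids the entire output-quota difficulty by never ``consuming'' triangles at all. It passes to a tripartite graph $G=(V_1\cup V_2\cup V_3,E_{1,2}\cup E_{*,3})$ in which every triangle has exactly one $E_{1,2}$ edge, and repeatedly splits each component's $V_3$-part in half (duplicating $V_1$, $V_2$, $E_{1,2}$ accordingly), calling \textsc{EdgeTriangleDetection} to prune $E_{1,2}$ edges that no longer lie in any triangle, and truncating $|E_{1,2}|$ to $t$. Truncating is safe because at that moment every surviving $E_{1,2}$ edge lies in at least one triangle and distinct $E_{1,2}$ edges give distinct triangles; splitting is free because $|E_{*,3}|$ never changes and $|E_{1,2}|$ at most doubles before being capped at $O(m)$. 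After $O(\log m)$ rounds every component has $|V_3^i|=1$, so each $E_{1,2}$ edge has a unique candidate witness and listing is a single pass. No witness-finder, no probabilistic sampling, no peeling bookkeeping. If you want to salvage your approach you would need an argument that the total output across rounds is $\min(m,\#\mathrm{triangles})$ and not a constant fraction thereof, and I do not see a clean way to certify that without something amounting to the paper's component-splitting invariant.
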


\begin{restatable}{theorem}{DetToLstThm}
\label{thm:detlst}
If \textsc{TriangleListing} for $t=m$ can be solved in $T(m)$ time, then \textsc{EdgeTriangleDetection} can be solved in (randomized, Las Vegas) $\Ot{T(m)}$ time.
\end{restatable}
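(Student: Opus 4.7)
The plan is to reduce \textsc{EdgeTriangleDetection} to $\Ot{1}$ randomized calls of \textsc{TriangleListing}. A single call $\textsc{TriangleListing}(G,m)$ already settles everything whenever strictly fewer than $m$ triangles are returned: all triangles of $G$ are exposed and each edge's answer can be read off in linear time. So the hard case is when the listing is truncated after exactly $m$ triangles; we can immediately declare as positive every edge appearing in one of them, but a potentially large set $U$ of edges remains unresolved.

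To handle $U$, I would invoke \textsc{TriangleListing} on $\Oh{\log^2 n}$ random induced subgraphs. For each $j \in \{0,1,\ldots,\log n\}$ and $\Oh{\log n}$ independent repetitions, sample a vertex set $V_j \subseteq V$ by including each vertex independently with probability $2^{-j}$ and run $\textsc{TriangleListing}(G[V_j], |E(G[V_j])|)$; every returned triangle contributes a positive answer for each of its three edges in the original graph. The intuition is that if $G$ has $T$ triangles in total, then at the scale $j^\star \approx \tfrac{1}{3}\log(T/m)$ the induced triangle count $\Oh{2^{-3j^\star}T}$ is balanced against the induced edge count $\Oh{2^{-2j^\star}m}$, so the listing call enumerates \emph{all} triangles of $G[V_{j^\star}]$ rather than being truncated; meanwhile each triangle of $G$ survives the sampling at this scale with probability $\Th{m/T}$, so it is witnessed with high probability over the $\Oh{\log n}$ repetitions.

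For the running time, the expected cost at scale $j$ is $\Ot{T(2^{-2j}m)}$, and summing over the $\Oh{\log n}$ scales telescopes to $\Ot{T(m)}$ since $T(\cdot)$ is at least linear in its argument. The Las Vegas guarantee comes from a verification step: every positive answer is self-certified by an explicit witnessing triangle, while an algorithmic failure is detectable (e.g.\ no scale produced an untruncated call at which $|V_{j}|$ is of reasonable size) and triggers a restart with fresh randomness; since the failure probability is $n^{-\Omega(1)}$, the expected total work remains $\Ot{T(m)}$.

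The main obstacle will be carrying out the joint argument that for every edge $e \in U$ lying in some triangle of $G$, at least one iteration witnesses a triangle through $e$. One must simultaneously control three events at scale $j^\star$: both endpoints of $e$ survive the vertex sampling, some triangle through $e$ survives, and the listing call on the induced subgraph is not truncated. Engineering the geometric scale sweep so that these events coincide with constant probability across all possible triangle-density regimes, and union-bounding over all edges, is the technical crux of the proof.
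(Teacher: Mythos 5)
There is a genuine gap, and you even flag it yourself in the last paragraph without resolving it. Your plan samples \emph{all} vertices with probability $2^{-j}$ and never removes already-detected edges. This fails in the regime where the total triangle count $T$ is large but some edge $e$ lies in only one (or a few) triangles. To keep the call $\textsc{TriangleListing}(G[V_j], |E(G[V_j])|)$ from being truncated, you need the triangle count of $G[V_j]$, which is $\Th{2^{-3j}T}$ in expectation, to stay below the edge budget $\Th{2^{-2j}m}$, i.e.\ you need $2^j \geq T/m$. (Your stated threshold $j^\star \approx \tfrac{1}{3}\log(T/m)$ has a slipped factor; equating $2^{-3j}T$ with $2^{-2j}m$ gives $j^\star = \log(T/m)$.) But at scale $j \geq \log(T/m)$, the probability that a fixed triangle through $e$ survives the vertex sampling is $2^{-3j} \leq (m/T)^3$, and also both endpoints of $e$ must survive. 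So when $T \gg m$, the unique triangle through a low-$\Delta_e$ edge survives with polynomially small probability exactly at the scales where the listing call is untruncated; no amount of $\Oh{\log n}$ repetitions fixes a $(m/T)^3$ failure probability, and the union bound over up to $m$ such edges only makes it worse. No single global threshold $j^\star$ tied to $T$ can serve all edges simultaneously.

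The paper's proof avoids this by two design choices your proposal lacks. First, it passes to a tripartite graph $(V_1\cup V_2\cup V_3,E)$ and subsamples \emph{only} $V_3$, leaving all of $V_1,V_2$ and every edge of interest in $V_1\times V_2$ intact; so a triangle through $e$ survives with probability $p$, not $p^3$, and $e$ itself always survives. Second, and crucially, after each phase it \emph{removes} every $V_1\times V_2$ edge already confirmed to lie in a triangle. The sweep runs $s = \log m,\log m-1,\ldots,0$ with sampling probability $2^{-s}$ and a fixed listing budget $\Th{m}$. Inductively, once all edges with $\Delta_e \geq 2^{s+1}$ have been detected and deleted, every remaining edge has $\Delta_e < 2^{s+1}$, so after subsampling $V_3$ at rate $2^{-s}$ the expected triangle count in the subgraph is at most $2m$ and the listing call is untruncated with constant probability; meanwhile any surviving edge with $\Delta_e \geq 2^s$ has a constant chance of retaining one of its $\geq 2^s$ triangles, and $\Oh{\log m}$ repetitions per phase boost this to $1 - 1/\text{poly}(m)$. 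This coupling — shrink the sampling probability and simultaneously prune heavy edges so the global triangle count stays $\Oh{m}$ at every scale — is precisely what your fixed-$j^\star$ argument is missing, and it is what lets the reduction serve edges of every triangle multiplicity.
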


While the above reductions work in the $t=m$ regime, the next theorem lets an \textsc{EdgeTriangleDetection} algorithm be used to efficiently list an even larger number of triangles.

\begin{restatable}{theorem}{LstToLstThm}
\label{thm:lstlst}
Assume that there is an algorithm which can list up to $m$ triangles in a graph with $m$ edges in $\Ot{m^c}$ time, for a constant $c$. Then $t \geq m$ triangles can be listed in (randomized, Monte Carlo) $\Ot{m^{3c-3} t^{3-2c}}$ time.
\end{restatable}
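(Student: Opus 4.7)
The plan is to reduce the task of listing $t$ triangles of $G$ to $\Ot{k^3}$ invocations of the given algorithm $\mathcal{A}$ on small subgraphs obtained via a random vertex coloring, where $k := \lceil C t/m\rceil$ for a large enough constant $C$. Concretely, color each vertex of $G$ independently and uniformly with one of $k$ colors. For every unordered triple $\{i,j,l\}$ of distinct colors I form the tripartite subgraph $G_{ijl}$ on $V_i\cup V_j\cup V_l$ consisting only of cross-color edges, run $\mathcal{A}$ on $G_{ijl}$, and filter its output to keep only the triangles whose three vertices receive colors $i,j,l$ each exactly once. Analogous subproblems handle the $\{i,i,j\}$ and $\{i,i,i\}$ color patterns. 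Since every triangle of $G$ falls into exactly one subproblem indexed by the multiset of its vertex colors, outputs from different subproblems are automatically disjoint and no deduplication across subproblems is needed.

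For the running time, a Chernoff bound gives that each $G_{ijl}$ has $\Ot{m/k^2}$ edges with high probability, so a single invocation of $\mathcal{A}$ costs $\Ot{(m/k^2)^c}$ and returns at most that many triangles. Summing over the $\binom{k}{3} = \Th{k^3}$ tripartite subproblems (which dominate the $\Oh{k^2}$ bichromatic and $\Oh{k}$ monochromatic ones since $c\le 3/2$) yields total time $\Ot{k^3 (m/k^2)^c} = \Ot{m^c k^{3-2c}}$; plugging in $k = \Th{t/m}$ gives the target $\Ot{m^{3c-3} t^{3-2c}}$.

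For correctness I argue that, with high probability, the scheme produces $\Om{\min(T,t)}$ distinct triangles, where $T$ is the total number of triangles in $G$. In each tripartite subproblem the numbers $\tau_{ijl}$ of triangles and $m'_{ijl}$ of edges have expectations $\Th{T/k^3}$ and $\Th{m/k^2}$ respectively, and both concentrate by Chernoff. If $T \le t$, then for $C$ large $\ev{\tau_{ijl}} \le \ev{m'_{ijl}}/2$, so $\tau_{ijl}\le m'_{ijl}$ holds in every subproblem w.h.p. (Chernoff plus union bound over the $\Oh{k^3}$ subproblems), $\mathcal{A}$ lists every colorful triangle, and since a $1-\Oh{1/k}$ fraction of triangles are colorful the total output is $\Om{T}$. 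If $T > t$, then essentially every tripartite subproblem saturates and contributes $\Om{m/k^2}$ each, totaling $\Om{mk}=\Om{t}$.

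The main obstacle is the black-box nature of $\mathcal{A}$: in a saturated subproblem it may return any subset of the triangles, so per-triangle listing probabilities are uncontrolled. I sidestep this by bounding only the aggregate count, which is controlled by the concentration of $\tau_{ijl}$ and $m'_{ijl}$ over the random coloring rather than by any property of $\mathcal{A}$'s choice. To amplify the Monte Carlo success probability to $1 - n^{-\Om{1}}$ the whole scheme is repeated $\Oh{\log n}$ times with fresh independent colorings and the outputs unioned, which adds only a $\polylog(n)$ factor absorbed in $\Ot{\cdot}$.
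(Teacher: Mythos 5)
The high-level idea of your proposal — randomly color with $k=\Theta(t/m)$ colors, run the basic algorithm on each tripartite cross-color subgraph, repeat $\Oh{\log n}$ times — matches the skeleton of the paper's InnerListingAlgorithm. But several of your claimed steps do not hold as stated, and the paper has additional machinery precisely to fix them.

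First, the concentration claims. You write that ``a Chernoff bound gives that each $G_{ijl}$ has $\Ot{m/k^2}$ edges with high probability,'' and similarly for $\tau_{ijl}$. But the indicator variables $\bbone[c_u, c_v \in \{i,j,l\},\, c_u \neq c_v]$ for different edges are not independent: they share vertices, and a single high-degree vertex $v$ (degree $\Om{m/k}$ is possible) swings $m'_{ijl}$ by $\Th{\deg v / k}$ depending on whether $c_v$ lands in the triple. So a Chernoff bound simply does not apply, and $m'_{ijl}$ does not concentrate around $m/k^2$ in general. This kills both the correctness argument (``$\tau_{ijl} \le m'_{ijl}$ in every subproblem w.h.p.'') and the running-time argument: without a per-subproblem edge cap, $\sum_{ijl}(m'_{ijl})^c$ can exceed $k^3(m/k^2)^c$ by a polynomial factor when $c>1$. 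The paper fixes both issues by (a) explicitly removing all vertices of degree $>m/r$ in a preprocessing step (handling their incident triangles by brute force), (b) declaring any triple with $>\zeta q$ edges ``failed'' and skipping it, and (c) replacing per-triple Chernoff concentration with Markov's inequality (giving only a constant failure probability per triple) amplified by $\Oh{\log m}$ independent recolorings and a union bound over triangles. Your $\Oh{\log n}$ repetitions at the end are in the right spirit, but they only help once a single iteration succeeds with constant probability, which the Chernoff-less reality does not yet establish.

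Second, the $T>t$ case. Your argument that ``essentially every tripartite subproblem saturates and contributes $\Om{m/k^2}$'' is unsubstantiated: triangles can cluster on a small set of color triples, and without concentration you cannot conclude a constant fraction of triples each yield $\Om{m/k^2}$ triangles. The paper avoids this entirely by running the coloring scheme only on graphs with at most $t$ triangles (the InnerListingAlgorithm guarantee), and handling $T>t$ with a separate outer loop (MainListingAlgorithm) that subsamples vertices with geometrically varying probability $p$; a second-moment/Chebyshev calculation shows that one choice of $p$ produces a subgraph with $\Th{t}$ triangles with constant probability, reducing to the $T\le t$ case. Your proposal needs an analogue of this reduction, or a substantially different argument, for the $T>t$ case.
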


Plugging in the $\Oh{m^{2\omega/(\omega+1)}}$ time for detection~\cite{Alon97} we match the current fastest listing algorithm~\cite{Bjorklund14}, running in $\Ot{m^{3(\omega-1)/(\omega+1)} t^{(3-\omega)/(\omega+1)}}$ time. Let us note that a fair share of intricacies in the proofs of Theorems~\ref{thm:lstdet} and~\ref{thm:lstlst} comes from the fact that they have to provide general black-box reductions. If one actually wishes to obtain a listing algorithm by chaining these reductions with the Alon-Yuster-Zwick algorithm~\cite{Alon97}, one can use the algorithm's counting ability to simplify things substantially. In particular, no randomization is needed in that case.

Last, let us define a decision variant of \textsc{2RangeEqPairsQuery}:

\begin{definition}[\textsc{2RangeDisjointQuery}]
Given an array of integers $A[1..n]$ and a sequence of pairs of nonoverlapping ranges $([l'_1,r'_1],[l^{''}_1,r^{''}_1]), ([l'_2,r'_2],[l^{''}_2,r^{''}_2]), \ldots, ([l'_q,r'_q],[l^{''}_q,r^{''}_q])$, determine for each pair $([l',r'],[l^{''},r^{''}])$ whether the sets of elements in these two ranges are disjoint, i.e. whether
\[\neg\exists_{i, j} : l' \leq i \leq r' < l'' \leq j \leq r'' \text{ and } A[i] = A[j].\]
\end{definition}

In the same way Lemma~\ref{lem:rngtri} establishes the equivalence of \textsc{2RangeEqPairsQuery} and \textsc{EdgeTriangleCounting}, we can prove \textsc{2RangeDisjointQuery} and \textsc{EdgeTriangleDetection} are equivalent:

\begin{restatable}{lemma}{BoolRngTriEquivLem}
\label{lem:boolrngtri}
If \textsc{2RangeDisjointQuery} for $q=n$ can be solved offline in $T_\perp(n)$ time, then \textsc{EdgeTriangleDetection} can be solved in $\Ot{T_\perp(m)}$ time. Conversely, if \textsc{EdgeTriangleDetection} can be solved in $T_\Delta(m)$ time, then \textsc{2RangeDisjointQuery} for $q=n$ can be solved offline in $\Ot{T_\Delta(n)}$ time.
\end{restatable}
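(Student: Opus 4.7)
Lemma~\ref{lem:boolrngtri} is the decision analogue of Lemma~\ref{lem:rngtri}, and the plan is to adapt both directions of the reduction used there, each time verifying that ``at least one matching pair exists'' in the range problem corresponds exactly to ``at least one triangle passes through the given edge'' in the graph problem.

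\emph{Reducing \textsc{EdgeTriangleDetection} to \textsc{2RangeDisjointQuery}.} Given a graph $G = (V, E)$ with $m$ edges, I would fix an arbitrary vertex ordering and build an array $A$ of length $2m$ by concatenating the adjacency lists in that order; each vertex $v$ then occupies a contiguous block $B_v$ whose entries are the identifiers of its neighbours. For every edge $uv \in E$, with $u$ preceding $v$ in the chosen ordering, I would issue a disjointness query on $B_u$ and $B_v$. A witness pair $i \in B_u$, $j \in B_v$ with $A[i] = A[j] = w$ is exactly a common neighbour $w$ of $u$ and $v$, i.e.\ a triangle $uvw$ through $uv$, and conversely every such triangle produces at least one such pair. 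Hence $B_u$ and $B_v$ are nondisjoint iff $uv$ lies on a triangle. No spurious witness arises from the endpoints themselves, since $u \notin B_u$ and $v \notin B_v$ in a simple graph. With $q = m$ queries over an array of length $2m$, this runs in $\Ot{T_\perp(m)}$ time.

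\emph{Reducing \textsc{2RangeDisjointQuery} to \textsc{EdgeTriangleDetection}.} Given $A[1..n]$ and $q = n$ queries, I would fix a balanced range tree over $[1,n]$ so that each query range decomposes into $O(\log n)$ canonical ranges. I would then build a graph $H$ with one \emph{value-node} per distinct value appearing in $A$ and one \emph{range-node} per canonical range. Two kinds of edges are added: first, whenever a position $i$ lies in a canonical range $R$ and $A[i] = v$, I connect $R$ to $v$, contributing $\Ot{n}$ edges since each position is in $O(\log n)$ canonical ranges. Second, for every input query $([l',r'],[l'',r''])$ with canonical decompositions $R'_1, \ldots, R'_s$ and $R''_1, \ldots, R''_t$ (so $s,t = O(\log n)$), I add every edge $(R'_i, R''_j)$; this contributes $\Ot{q} = \Ot{n}$ \emph{query edges}. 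After running \textsc{EdgeTriangleDetection} on the resulting $\Ot{n}$-edge graph $H$, a query edge $(R'_i, R''_j)$ is on a triangle iff some value-node $v$ is adjacent to both range-nodes, iff some common value occurs in both canonical ranges. The answer to each query is the OR of the triangle flags over its $O(\log^2 n)$ query edges, giving $\Ot{T_\Delta(n)}$ total time.

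\emph{Main obstacle.} The delicate point will be to rule out spurious triangles in $H$: there are no value-value edges, and the query edges form a bipartite graph between the ``left'' and ``right'' canonical ranges within each single query, so any triangle in $H$ must consist of one value-node, one left-range node and one right-range node belonging to the same query --- precisely the intended witness structure. Once this is checked, the correspondence between query-edge triangle flags and query answers is exact and the lemma follows.
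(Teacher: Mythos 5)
Your first direction (\textsc{EdgeTriangleDetection} $\to$ \textsc{2RangeDisjointQuery}) matches the paper's reduction and is correct; choosing $u$ before $v$ in the vertex ordering ensures the two adjacency-list blocks satisfy the required nonoverlapping condition.

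The second direction, however, has a genuine gap. You build a graph $H$ with \emph{one} node per canonical range, and the correctness hinges on your claim in the ``main obstacle'' paragraph that ``any triangle in $H$ must consist of one value-node, one left-range node and one right-range node''. The argument you give for this --- that query edges are bipartite between left and right ranges \emph{within a single query} --- does not imply the query-edge graph is bipartite \emph{globally}. A single canonical range can appear in the left decomposition of one query and the right decomposition of another. Concretely, the three queries $([1,2],[3,4])$, $([1,2],[5,6])$, $([3,4],[5,6])$ each satisfy the nonoverlapping requirement, decompose into single canonical intervals, and produce the query edges $(R_{[1,2]},R_{[3,4]})$, $(R_{[1,2]},R_{[5,6]})$, $(R_{[3,4]},R_{[5,6]})$ --- a range--range--range triangle. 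Every one of those three query edges would then be flagged as ``on a triangle'' by \textsc{EdgeTriangleDetection}, and you would wrongly report each of those pairs as nondisjoint regardless of the array contents.

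The paper avoids this by making the constructed graph genuinely tripartite: it keeps \emph{two} separate copies of the canonical-range nodes, a set $V$ used only for left decompositions and a set $W$ used only for right decompositions, with each value-node $u_k$ connected to both the $V$-copy and the $W$-copy of each base interval containing an occurrence of $k$. Since there are no edges inside $U$, inside $V$, or inside $W$, every triangle has exactly one vertex from each part, which rules out the spurious range-only triangles your single-copy graph admits. With that fix your reduction goes through, and --- as the paper remarks --- you can discard parallel $U$--$V$ and $U$--$W$ edges outright because multiplicity does not affect triangle \emph{existence}, which is the only place the detection version is simpler than the counting version.
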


Recall that in Section~\ref{sec:problems} we defined four counting range query problems. Out of them,  \textsc{2RangeEqPairsQuery} is the only one which remains hard in the decision variant. The other three become solvable in linear time, using range minimum query data structures~\cite{Gabow84,Bender00}.

\paragraph{Connections to matrix products.}

\textsc{2RangeEqPairsQuery} is very convenient in drawing connections of our class with matrix product problems. Let us start with a simple observation.

\begin{observation}
\label{obs:bmm}
Multiplication of two $\sqrt{n} \times \sqrt{n}$ $(0,1)$-matrices can be reduced to \textsc{2RangeEqPairsQuery} with $n$ queries in an array of length $\Oh{n}$.
\end{observation}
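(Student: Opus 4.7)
The plan is to encode the Boolean matrix product directly into the array via a block layout, so that each output entry of the product corresponds to a single \textsc{2RangeEqPairsQuery} query. Let $M^{(1)}, M^{(2)}$ be the two $s \times s$ $(0,1)$-matrices, where $s = \sqrt{n}$, and let $C = M^{(1)} M^{(2)}$. The key identity is that $C_{ik} = |\{j : M^{(1)}_{ij} = 1 \text{ and } M^{(2)}_{jk} = 1\}|$, which is naturally a count of matching ``values of $j$.''

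First I would build an array $A$ of length $2s^2 = \Oh{n}$ in two halves. In the first half, I place $s$ contiguous blocks of length $s$, one per row of $M^{(1)}$: within the $i$-th block, the $j$-th entry is set to the integer $j$ if $M^{(1)}_{ij} = 1$, and otherwise to a fresh dummy value drawn from a reserved range disjoint from $\{1, \ldots, s\}$ (e.g.\ using distinct values from $\{s+1, s+2, \ldots\}$, one per position). In the second half I do the symmetric thing for columns of $M^{(2)}$: the $j$-th entry of the $k$-th block is $j$ if $M^{(2)}_{jk} = 1$, and otherwise another fresh dummy value. Because all dummies are pairwise distinct and lie outside $\{1, \ldots, s\}$, the only equal pairs that can occur between the two halves are of the form $(p, p')$ where $A[p] = A[p'] = j$ for some $j \in \{1, \ldots, s\}$.

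Next I would define the queries: for every pair $(i, k) \in [s] \times [s]$, issue the query consisting of the $i$-th block in the first half paired with the $k$-th block in the second half. These two ranges are nonoverlapping and satisfy the required ordering $r' < l''$ since the first half occupies positions $1, \ldots, s^2$ and the second half $s^2+1, \ldots, 2s^2$. By construction, the answer to this query counts exactly the number of $j$ for which both $M^{(1)}_{ij} = 1$ and $M^{(2)}_{jk} = 1$, i.e.\ $C_{ik}$. The total number of queries is $s^2 = n$, and the array length is $2n$, establishing the claimed reduction.

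There is no real obstacle here; the only thing to be careful about is the dummy assignment, which must guarantee that no spurious equalities are introduced within a block, across blocks in the same half, or between the two halves. Using globally distinct dummy values drawn from a range disjoint from $\{1, \ldots, s\}$ settles this cleanly.
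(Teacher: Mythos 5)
Your proof is correct and uses essentially the same idea as the paper: encode each row of the first matrix and each column of the second as a block of column-index values, so a \textsc{2RangeEqPairsQuery} on a (row block, column block) pair counts the index intersection and hence the Boolean product entry. The only cosmetic difference is that you pad every block to fixed length $s$ with globally distinct dummies, whereas the paper uses variable-length blocks that list only the positions of the $1$s; both give an array of length $\Oh{n}$ and $n$ queries.
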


To see why this is true, represent each row of the first matrix and each column of the second matrix as an array of the indices in which the row/column has a 1. Then, concatenate these representations to form a single large array. Now, the value of the $(i,j)$ cell of the output matrix can be determined by asking a query about the number of pairs of equal indices in the subarrays corresponding to the row $i$ and column $j$.

The so-called BMM Hypothesis states that no ``combinatorial'' algorithm can multiply two $n \times n$ matrices, even over the Boolean semiring, in time $\Oh{n^{3-\eps}}$, for any $\eps>0$. Under this hypothesis, Mo's algorithm is optimal (up to subpolynomial factors) among ``combinatorial'' algorithms for range query problems in our class. While the notion of a ``combinatorial'' algorithm is not well defined, that observation shows that using a fast matrix multiplication algorithm -- what Alon-Yuster-Zwick algorithm~\cite{Alon97} for \textsc{EdgeTriangleCounting} does -- is necessary in order to beat the $n^{3/2}$ barrier. 

We will now relate our equivalence class to a slightly harder matrix product problem, whose complexity seems to be independent of the $3$SUM Hypothesis. The $(\min,\max)$-product of matrices $A$ and $B$ is the matrix $C$ such that $C[i][j] = \min_k \max(A[i][k], B[k][j])$. The problem of computing the $(\min,\max)$-product stems from research on the All-Pairs Bottleneck Paths problem~\cite{VWAPBP,VWAPBPj,Duan09}, and recently has been shown to be equivalent to approximating All-Pairs Shortest Paths~\cite{Bringmann19}. In Section~\ref{sec:minmax} we show the following (non-tight) reduction.

\begin{restatable}{theorem}{MinMaxThm}
\label{thm:minmax}
If \textsc{2RangeDisjointQuery} for $q=n$ can be solved offline in $T(n)$ time, then the $(\min,\max)$-product of two $n \times n$ matrices can be computed in $\Ot{T(n^2)}$ time.
\end{restatable}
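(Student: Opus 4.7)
The plan is to reduce $(\min,\max)$-product to $O(\log n)$ batched, offline calls to \textsc{2RangeDisjointQuery} via parallel binary search. The starting observation is that $C[i][j]\le v$ holds if and only if some $k$ satisfies both $A[i][k]\le v$ and $B[k][j]\le v$, i.e.\ the sets $S_A(i,v):=\{k:A[i][k]\le v\}$ and $S_B(j,v):=\{k:B[k][j]\le v\}$ share an element. So the whole task becomes: for every $(i,j)$, find the smallest $v$ for which $S_A(i,v)\cap S_B(j,v)\neq\emptyset$, and every threshold check is a set-disjointness question on subsets of $\{1,\dots,n\}$.

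To turn such a disjointness question into a range query, I pre-sort each row of $A$ and each column of $B$ by entry value; let $\sigma_i$ be the permutation with $A[i][\sigma_i(1)]\le\dots\le A[i][\sigma_i(n)]$ and let $\tau_j$ be the analogous permutation for column $j$ of $B$. Build an array $X$ of length $2n^2$ whose left half contains, one block of length $n$ per row $i$, the integers $\sigma_i(1),\dots,\sigma_i(n)$, and whose right half contains, one block per column $j$, the integers $\tau_j(1),\dots,\tau_j(n)$. Then $S_A(i,v)$ equals, as a set, the length-$p_i(v)$ prefix of row $i$'s block (where $p_i(v):=|\{k:A[i][k]\le v\}|$), and $S_B(j,v)$ equals the length-$p'_j(v)$ prefix of column $j$'s block. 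Since the left and right halves of $X$ are positionally disjoint, the two ranges are automatically nonoverlapping, so testing $S_A(i,v)\cap S_B(j,v)=\emptyset$ is a single \textsc{2RangeDisjointQuery} on $X$.

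Finally, I run parallel binary search. Since every $C[i][j]$ is an entry of $A$ or $B$, there are at most $2n^2$ candidate values and hence $O(\log n)$ rounds. In each round every pair $(i,j)$ contributes one query at the midpoint of its current candidate interval, so each round is a single \emph{offline} instance of \textsc{2RangeDisjointQuery} on an array of length $O(n^2)$ with $n^2$ queries --- this is the $q=n$ regime with the array length playing the role of $n$, and by hypothesis it takes $T(n^2)$ time. Computing the prefix lengths $p_i(v_{ij}),p'_j(v_{ij})$ costs $O(n^2\log n)$ per round via binary search on the pre-sorted rows/columns and is dominated by $T(n^2)\ge\Om{n^2}$. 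Summing over $O(\log n)$ rounds yields $\Ot{T(n^2)}$ total, as claimed.

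The only real subtlety is bookkeeping: within each binary-search round all $n^2$ queries must be emitted as a single offline batch rather than served one at a time, so that the hypothetical \textsc{2RangeDisjointQuery} algorithm can amortize its work across them. Everything else --- sort-by-value once, encode each set as a prefix, binary-search the thresholds in parallel --- is routine, and I do not expect any step to pose a serious obstacle. The $n \mapsto n^2$ blow-up inherent in the construction is exactly why this reduction is not claimed to be tight.
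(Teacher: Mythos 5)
Your proposal is correct and takes essentially the same approach as the paper: both sort each row of $A$ and column of $B$ by value, concatenate the resulting permutations into a single array so that the relevant sets become prefixes, and then run an $O(\log n)$-round parallel binary search, each round being one offline \textsc{2RangeDisjointQuery} instance with $n^2$ queries on an array of length $\Theta(n^2)$. The only cosmetic difference is that you explicitly separate the row-blocks and column-blocks into the two halves of the array to guarantee nonoverlapping ranges, which the paper leaves implicit.
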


The current fastest algorithm for $(\min,\max)$-product runs in $\Ot{n^{2.6865}}$ time~\cite{Duan09}. This algorithm is quite intricate and seems difficult to improve upon. Assuming this algorithm is optimal, Theorem~\ref{thm:minmax}, combined with Lemma~\ref{lem:boolrngtri} and Theorem~\ref{thm:detlst}, yields an $\Om{m^{1.3432}}$ conditional lower bound for \textsc{EdgeTriangleDetection} and \textsc{TriangleListing}. This is a slightly higher barrier than the $m^{4/3}$ one following from the $3$SUM Hypothesis, but the hardness assumption is much less understood and presumably less likely to be true.

Moreover, there is a group of equivalent matrix products~\cite{Labib19arXiv}, including dominance product~\cite{Matousek91}, equality product~\cite{cs367hw} (also later called Hamming distance product) and sparse matrix product, which can be reduced to \textsc{2RangeEqPairsQuery} (and thus \textsc{EdgeTriangleCounting}) with an even simpler argument, generalizing Observation~\ref{obs:bmm}. However these products can be computed in $\Ot{n^{2.6598}}$ time~\cite{Gold17}, slightly faster than the $(\min,\max)$-product running time, so they provide a lower bound weaker than the one based on $3$SUM, both in terms of the exponent value and perhaps credibility.

\paragraph{Preliminaries.}We assume all graphs have no isolated vertices so that the number of vertices is never asymptotically larger than the number of edges.
Throughout the paper we will omit floors and ceilings for simplicity. 

\section{Equivalence between range query problems}

\label{sec:rangequery}

In this section we present reductions between the range query problems in our equivalence class. We start with the four problems defined in Section~\ref{sec:problems}. Although their equivalence follows from a more general Theorem~\ref{thm:feqclass}, we focus on them first, so that we can highlight the main ideas behind our reductions, unobscured by technical details required for the general result.

\EquivClassLem*

\begin{proof}
 (\textsc{2RangeInversionsQuery} $\to$ \textsc{RangeInversionsQuery}):\\
 Observe that for every $a \leq b \leq c \leq d$ we have $\invb([a,b],[c,d]) = \inv([a,d]) - \inv([a,c]) - \inv([b,d]) + \inv([b,c])$.
 Thus, the answer for each pair of intervals in \textsc{2RangeInversionsQuery} can be obtained from four queries in a \textsc{RangeInversionsQuery} instance.
 
 (\textsc{RangeInversionsQuery} $\to$ \textsc{2RangeInversionsQuery}):\\
 First, note that the values $\inv([1,x])$, for all $1 \leq x \leq n$, can be precomputed in $\Oh{n \log n}$ time. To do that, note that $\inv([1,x]) - \inv([1,x-1])$ is equal to $\# \{i < x : A[i] > A[x]\}$. If we store all elements of $A[1..x-1]$ in a balanced binary search tree, the value of $\inv([1,x]) - \inv([1,x-1])$ can be found in logarithmic time. After this step, we add $A[x]$ to the tree.
 It remains to see that $\invb([1,a-1],[a,b]) = \inv([1,b]) - \inv([1,a-1]) - \inv([a,b])$, so we can find $\inv([a,b])$ with a single \textsc{2RangeInversionsQuery} $\invb([1,a-1],[a,b])$, as the other terms are precomputed.
 
 (\textsc{RangeEqPairsQuery} $\leftrightarrow$ \textsc{2RangeEqPairsQuery}): \\
 The proofs are obtained from the above ones by simply replacing $\inv$ with $\eqp$.
 
 (\textsc{2RangeEqPairsQuery} $\to$ \textsc{2RangeInversionsQuery}):\\
 To solve \textsc{2RangeEqPairsQuery} we simply use the fact that $A[i] = A[j]$ if and only if neither $A[i]>A[j]$ nor $A[i]<A[j]$. Formally, we employ a second array $A'[1..n]$ with $A'[x] = -A[x]$ and on both arrays we use the algorithm for \textsc{2RangeInversionsQuery}. Therefore, for every $a \leq b \leq c \leq d$ we can compute $\invb_A([a,b],[c,d]) = \{(i,j) \in [a,b] \times [c,d] : A[i]>A[j]\}$ as well as $\invb_{A'}([a,b],[c,d]) = \{(i,j) \in [a,b] \times [c,d] : A'[i]>A'[j]\} = \{(i,j) \in [a,b] \times [c,d] : A[i]<A[j]\}$. It is clear that $\eqpb([a,b],[c,d]) = (b-a+1)\cdot(d-c+1) - \invb_{A}([a,b],[c,d]) - \invb_{A'}([a,b],[c,d])$.
 
 (\textsc{2RangeInversionsQuery} $\to$ \textsc{2RangeEqPairsQuery}):\\
We assume w.l.o.g.~that the elements in $A$ are integers in $[1, n]$. Indeed, if they were not, we could replace each element with its position in the sorted order of all elements of $A$.

Let $k = \lceil \log n \rceil$ be the number of bits needed to represent the integers in $A$. For a $k$-bit integer $x$ and $1 \leq j \leq k$ let $v_j(x)$ denote the $j$-th bit of $x$, starting from the most significant. Also, let $p_j(x)$ denote the number obtained from $x$ by taking only the $j$ most significant bits (in other words, $p_j(x) = \lfloor x/2^{k-j} \rfloor$). We assume $p_0(x) = 0$.

We create $k$ new arrays $A_1, \ldots, A_k$, all of length $2n$. For $i = 1, 2, \ldots, n$ and $1 \leq t \leq k$ we define:

$A_t[i] = \begin{cases}
           p_{t-1}(A[i])\ &\mbox{if}\ v_t(A[i]) = 1 \\
           -\infty &\mbox{otherwise}
          \end{cases}$
          
$A_t[n+i] = \begin{cases}
           p_{t-1}(A[i])\ &\mbox{if}\ v_t(A[i]) = 0 \\
           \infty &\mbox{otherwise}
          \end{cases}$

We claim that $\invb_A([a,b],[c,d]) = \sum_{t=1}^{k} \eqpb_{A_t}([a,b],[n+c,n+d])$. This equality allows us to simulate \textsc{2RangeInversionsQuery} with $\log n$ instances of \textsc{2RangeEqPairsQuery} and thus render the proof done. To prove it, first let $i \in [a,b]$ and $j \in [c,d]$ be such that $A[i]>A[j]$. Then there is exactly one $1 \leq t \leq k$ such that $p_{t-1}(A[i]) = p_{t-1}(A[j])$, $v_t(A[i]) = 1$ and $v_t(A[j]) = 0$, which implies $A_t[i] = A_t[n+j]$. Every pair that is counted in $\invb_A([a,b],[c,d])$ then corresponds to a pair in $\sum_{t=1}^{k} \eqpb_{A_t}([a,b],[n+c,n+d])$, so $\invb_A([a,b],[c,d]) \leq \sum_{t=1}^{k} \eqpb_{A_t}([a,b],[n+c,n+d])$.

Conversely, if $A_t[i] = A_t[n+j]$ for some $a \leq i \leq b$ and $c \leq j \leq d$, then this equal element can be neither $-\infty$ nor $\infty$, as these infinities appear only in first and second half of $A_t$, respectively, and the halves are disjoint. So $p_{t-1}(A[i]) = p_{t-1}(A[j])$, $v_t(A[i]) = 1$ and $v_t(A[j]) = 0$, which means that $A[i] > A[j]$ is an inversion and $t$ is the most significant bit on which $A[i]$ and $A[j]$ differ. In particular, we cannot obtain the same pair $(i,j)$ from different $t$'s. This proves $\invb_A([a,b],[c,d]) \geq \sum_{t=1}^{k} \eqpb_{A_t}([a,b],[n+c,n+d])$, so we are done.
\end{proof}

Now we are ready to prove a more general result.

\fEquivClassThm*

The proof is similar in spirit to the proof of Lemma~\ref{lem:eqclass}, but the ad hoc reductions between the equality and inversion predicates are replaced with a general tool developed by Labib, Uzna\'nski and Wolleb-Graf~\cite{Labib19}.

\begin{theorem}[Theorem 10 in~\cite{Labib19}, rephrased]
\label{thm:polytoham}
Let $f:\Z^2\to\Z$ be any piecewise polynomial function of degree $d$ with $c$ summands. There exist integer $k=\Oh{c \cdot d \cdot \log^{d+1} U}$, constant-time computable functions $g_1,\ldots,g_k$, $h_1,\ldots,h_k$, and coefficients $\alpha_1,\ldots,\alpha_k$, such that for every $1 \leq x,y \leq U$
\[f(x,y) = \sum_{i=1}^k \alpha_i \cdot \eqp(g_i(x), h_i(y)).\]
\end{theorem}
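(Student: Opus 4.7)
The plan is to prove the decomposition additively: write $f(x,y) = \sum_{j=1}^c P_j(x,y) \cdot \bbone[A_j x + B_j y + C_j > 0]$ and attack one summand at a time, since a sum of expansions is itself an expansion with concatenated coefficient lists. For a single summand the task becomes: express $P(x,y) \cdot \bbone[Ax+By+C>0]$ with $\deg P \leq d$ as a linear combination of $\Oh{d \cdot \log^{d+1} U}$ equality indicators $\eqp(g(x), h(y))$. Multiplying by $c$ then gives the required bound on $k$.

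First I would handle the halfplane indicator. Setting $u(x) \eqdef Ax$ and $v(y) \eqdef -By-C$ (shifted, if necessary, so both take values in $[1,U']$ with $U' = \Oh{U}$), the indicator becomes $\bbone[u(x) > v(y)]$. The bit-scan decomposition already used inside the \textsc{2RangeInversionsQuery} $\to$ \textsc{2RangeEqPairsQuery} step of Lemma~\ref{lem:eqclass} gives
\[\bbone[u(x) > v(y)] = \sum_{t=1}^{\lceil\log U'\rceil} \eqp(\tilde g_t(x), \tilde h_t(y)),\]
where $\tilde g_t(x)$ returns the pair $($top $t{-}1$ bits of $u(x)$, level $t)$ when bit $t$ of $u(x)$ is $1$ and a unique sentinel otherwise, and $\tilde h_t(y)$ does the symmetric thing on $v(y)$ with the roles of bits $0$ and $1$ swapped. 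This produces $\Oh{\log U}$ equality summands whose sum equals the halfplane indicator.

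Next I would handle the polynomial factor. Expanding $P(x,y) = \sum_{a+b \leq d} c_{a,b}\, x^a y^b$ and then further expanding $x = \sum_j 2^j b^x_j$, $y = \sum_j 2^j b^y_j$ into binary digits, each monomial $x^a y^b$ becomes a signed sum of $\Oh{\log^{a+b} U}$ products of the form $2^s (b^x_{j_1} \cdots b^x_{j_a})(b^y_{k_1} \cdots b^y_{k_b})$. Each such product is exactly the indicator that a fixed subset of $x$-bits and a fixed subset of $y$-bits all equal $1$, so it is $\eqp(G(x), H(y))$ where $G(x)$ returns a common label when the required $x$-bits are set and a unique sentinel otherwise (and symmetrically for $H(y)$). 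Since there are only $d'+1$ monomials of degree exactly $d'$, the total over all monomials telescopes to $\sum_{d'=0}^{d} (d'+1) \cdot \Oh{\log^{d'} U} = \Oh{d \cdot \log^d U}$, dominated by the top level.

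Finally, I would multiply the halfplane expansion and the polynomial expansion term-by-term, using the identity
\[\eqp(g_1(x), h_1(y)) \cdot \eqp(g_2(x), h_2(y)) = \eqp\bigl(\langle g_1(x), g_2(x) \rangle,\ \langle h_1(y), h_2(y) \rangle\bigr)\]
that holds for any injective integer pairing $\langle \cdot,\cdot \rangle$. The product expansion yields $\Oh{d \cdot \log^{d+1} U}$ equality terms per summand of $f$, and summing over the $c$ summands gives $k = \Oh{c \cdot d \cdot \log^{d+1} U}$ as claimed. The main bookkeeping hazard, and the step I expect to need the most care, is ensuring that sentinels introduced at different construction steps never collide with each other or with legitimate label values, so that no spurious $\eqp$ fires; this is handled by tagging every sentinel with the tuple of indices (summand $j$, bit-level $t$, monomial $(a,b)$, bit-pattern) identifying its origin, which keeps every $g_i$ and $h_i$ constant-time computable.
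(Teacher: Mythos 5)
The paper does not actually prove this statement: it is imported verbatim as Theorem~10 of Labib, Uzna\'nski and Wolleb-Graf~\cite{Labib19} and used as a black box, so there is no in-paper proof to compare against. Your proposal is a genuine self-contained proof, and it is correct; it essentially reconstructs the argument of~\cite{Labib19}. The three ingredients all check out: (i) the most-significant-differing-bit decomposition of $\bbone[u(x) > v(y)]$ into $\Oh{\log U}$ equality indicators is exactly the device the paper itself uses in the \textsc{2RangeInversionsQuery} $\to$ \textsc{2RangeEqPairsQuery} step of Lemma~\ref{lem:eqclass}, applied here to $u(x)=Ax$ and $v(y)=-By-C$; (ii) binary-expanding each monomial $x^a y^b$ turns it into a signed sum of bit-product indicators, each of which splits as $\eqp(G(x),H(y))$, and your count $\sum_{d'}(d'+1)\Oh{\log^{d'}U}=\Oh{d\log^d U}$ is right since the top degree dominates; (iii) the pairing identity $\eqp(g_1(x),h_1(y))\cdot\eqp(g_2(x),h_2(y))=\eqp(\langle g_1(x),g_2(x)\rangle,\langle h_1(y),h_2(y)\rangle)$ is valid for any injective pairing, giving $\Oh{d\log^{d+1}U}$ terms per summand and $k=\Oh{c\cdot d\cdot\log^{d+1}U}$ overall. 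Two minor points worth making explicit: the bit-length of $u(x)$ and $v(y)$ is $\Oh{\log(\max(|A|,|B|,|C|)\cdot U)}$ rather than $\Oh{\log U}$, which is harmless only because the halfplane coefficients are part of the fixed description of $f$ (and in the paper's application are polynomially bounded); and the degenerate axis-orthogonal summands ($A=0$ or $B=0$) should be dispatched separately since then one of $u,v$ is constant, though they are if anything easier. Your sentinel-tagging discipline correctly prevents spurious equalities across levels, summands and monomials.
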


\begin{theorem}[Theorem 11 in~\cite{Labib19}, rephrased]
\label{thm:hamtopoly}
Let $f:\Z^2\to\Z$ be any non-axis-orthogonal piecewise polynomial function of degree $d$. There exist integer $k=\Oh{d^2}$, constant-time computable functions $g_1,\ldots,g_k$, $h_1,\ldots,h_k$, and coefficients $\alpha_1,\ldots,\alpha_k$, such that for every $x, y \in \Z$
\[\eqp(x,y) = \sum_{i=1}^k \alpha_i \cdot f_i(g_i(x), h_i(y)),\]
where each $f_i$ either equals to $f$ or is a simple multiplication, i.e.~$f_i(x, y) = x \cdot y$.
\end{theorem}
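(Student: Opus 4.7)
The plan is to express $\eqp(x,y)$ as an integer linear combination of $f$-evaluations at affinely transformed arguments, together with separable ``correction'' terms absorbed into the multiplication $f_i$'s. I would first observe that, for any affine $g, h$ and any axis-orthogonal summand $P_i(x, y) \cdot \bbone[A_i x + C_i > 0]$ of $f$ (say $B_i = 0$), the substituted summand $P_i(g(x), h(y)) \cdot \bbone[A_i g(x) + C_i > 0]$ is a polynomial in $x$ and $y$ multiplied by an indicator depending only on $x$; expanding the polynomial in monomials writes it as a sum of products $G(x) \cdot H(y)$, each being the multiplication $f_i(x, y) = x \cdot y$ evaluated at $(G(x), H(y))$. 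Thus axis-orthogonal summands of $f$ produce, under any affine substitution, only ``multiplication''-expressible contributions and may be treated as absorbable noise.

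Since $f$ is non-axis-orthogonal, pick a slanted summand $P^*(x, y) \cdot \bbone[A^* x + B^* y + C^* > 0]$ with $A^*, B^* \neq 0$. Using affine substitutions $g_{s,t}(x) = -B^* x + s$ and $h_{s,t}(y) = A^* y + t$, the halfplane from this summand becomes $\bbone[A^* B^*(y - x) + (A^* s + B^* t + C^*) > 0]$, i.e.\ a diagonal halfplane along $y = x$ whose intercept can be tuned by the integer shifts $(s, t)$. Varying $(s, t)$ over a $(d+1) \times (d+1)$ grid of shifts yields $(d+1)^2 = \Oh{d^2}$ evaluations of $f$. The polynomial factors $P^*(g_{s,t}(x), h_{s,t}(y))$ all lie in the $(d+1)^2$-dimensional space of polynomials of bidegree $\le (d, d)$, so a Vandermonde-type integer combination annihilates them and leaves a clean integer multiple of a single diagonal halfplane indicator $\bbone[y - x > k]$ for any chosen threshold $k$. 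Executing this extraction at $k = -1$ and $k = 0$ and subtracting produces $\bbone[y - x > -1] - \bbone[y - x > 0] = \bbone[x = y] = \eqp(x, y)$ for integer inputs.

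The main obstacle is the interference from \emph{other} slanted summands of $f$, which are also activated in every $f$-evaluation above. Under our substitution, a summand of slope $(A_j, B_j) \not\propto (A^*, B^*)$ contributes an indicator of the form $\bbone[-A_j B^* x + B_j A^* y + \cdots > 0]$, i.e.\ a halfplane not parallel to $y = x$, multiplied by its own polynomial. The Vandermonde combination still kills that polynomial factor (polynomial cancellation is indifferent to which indicator multiplies it), but leaves behind a weighted sum of non-diagonal halfplane indicators that do not themselves equal $\eqp$. To finish, I would argue that these residuals can be re-written as separable functions of $x$ and $y$ and hence absorbed into extra multiplication terms, using the fact that their boundaries are never parallel to $y = x$ and that the Vandermonde coefficients on the $(d+1)^2$ grid sum to zero along any non-diagonal direction; an alternative is to enlarge the construction by a bounded number of matched $f$-evaluations that explicitly cancel each unwanted slope while keeping the total count at $\Oh{d^2}$. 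Controlling this interference without inflating the term count is the delicate quantitative step I expect to be the principal difficulty.
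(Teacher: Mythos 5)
First, a point of reference: the paper does not prove this statement --- it is imported verbatim (rephrased) as Theorem 11 of Labib, Uzna\'nski and Wolleb-Graf~\cite{Labib19} --- so your proposal can only be judged on its own. The overall finite-difference strategy is the natural one, but there are two genuine gaps. The central one is the claim that the Vandermonde-type combination over the $(s,t)$-grid ``annihilates'' the polynomial factors ``and leaves a clean integer multiple of a single diagonal halfplane indicator.'' The polynomial $P^*(g_{s,t}(x),h_{s,t}(y))$ and the indicator $\bbone[A^*B^*(y-x)+A^*s+B^*t+C^*>0]$ are multiplied together and \emph{both} depend on $(s,t)$, so a combination that annihilates degree-$d$ polynomials in $(s,t)$ does not annihilate the product except where the indicator is constant over the whole $(d{+}1)\times(d{+}1)$ window. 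What such a combination (an iterated finite difference in $s$ and $t$) actually yields is a function supported on a band of width $O(d)$ around the line $y=x$, whose value at a band point is a signed sum of values of $P^*$ at arguments still depending on $x$ and $y$ --- not a constant times $\bbone[y-x>k]$. Converting that band residual into $\eqp$ needs further ideas: e.g., exploiting that $g_i,h_i$ need not be affine, so one may rescale inputs by a factor much larger than $d$ to force the band to meet the integer lattice only on the diagonal, and one must then still show the on-diagonal residual is a nonzero constant (which requires an argument that $P^*$ does not behave degenerately along the boundary line). None of this is in the sketch.

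The second gap is the one you flag yourself: interference from the other slanted summands. Your proposed repair --- that their residuals ``can be re-written as separable functions of $x$ and $y$'' --- is false as stated: a halfplane indicator $\bbone[ax+by+c>0]$ with $a,b\neq 0$ is not of the form $\sum G(x)\cdot H(y)$ with a bounded number of terms, and after the differencing the surviving residual from such a summand is supported on a band around a genuinely slanted (non-diagonal, non-axis-parallel) line, which is likewise not separable. The alternative you mention (extra matched $f$-evaluations cancelling each unwanted slope) is exactly the delicate step and is left entirely unexecuted; it is not clear it preserves the $\Oh{d^2}$ count. As written, the argument does not go through, although the skeleton (affine substitution aligning one slanted boundary with the diagonal, polynomial annihilation by $O(d^2)$ shifted evaluations, absorption of axis-orthogonal pieces into multiplication terms) is a plausible starting point consistent with the cited source.
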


Theorem~\ref{thm:feqclass} follows from the next two lemmas, which establish the equivalence between all the range query problems in our equivalence class, and Lemma~\ref{lem:rngtri}, which we prove in Section~\ref{sec:triangle}, and which relates the range query problems to \textsc{EdgeTriangleCounting}.

\begin{lemma}
Let $f:\Z^2\to\Z$ be any non-axis-orthogonal piecewise polynomial function of constant degree and $\mathrm{\polylog}(n)$ number of summands. For input values bounded in absolute value by $\mathrm{poly}(n)$, the problems \textsc{2Range-$f$-PairsQuery} and \textsc{2RangeEqPairsQuery} have the same time complexity, up to polylogarithmic factors.
\end{lemma}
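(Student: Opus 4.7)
The plan is to establish the two directions separately, each using one of the two decomposition theorems of Labib, Uzna\'nski and Wolleb-Graf quoted just above (Theorems~\ref{thm:polytoham} and~\ref{thm:hamtopoly}). Both directions follow the same schema: apply the relevant decomposition pointwise to turn a sum over pairs into a bounded number of sums over pairs of a simpler form, then realise each simpler sum as a single query on a suitably transformed length-$2n$ array in which the first half encodes the ``left'' mapping and the second half the ``right'' mapping applied to $A$.

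For the direction \textsc{2Range-$f$-PairsQuery}~$\to$~\textsc{2RangeEqPairsQuery}, I apply Theorem~\ref{thm:polytoham} with $U=\mathrm{poly}(n)$, which yields $k=\mathrm{polylog}(n)$ triples $(g_t,h_t,\alpha_t)$ such that $f(x,y)=\sum_{t=1}^{k}\alpha_t\cdot\eqp(g_t(x),h_t(y))$ for all array values. For each $t$, I build an array $B_t[1..2n]$ defined by $B_t[i]=g_t(A[i])$ for $i\leq n$ and $B_t[n+j]=h_t(A[j])$ for $j\leq n$. Given a query $([l',r'],[l'',r''])$ with $r'<l''$, the answer of \textsc{2RangeEqPairsQuery} on $B_t$ for the ranges $[l',r']$ and $[n+l'',n+r'']$ is exactly $\sum_{i\in[l',r'],\,j\in[l'',r'']}\eqp(g_t(A[i]),h_t(A[j]))$, since the two ranges lie in disjoint halves and therefore the only equal pairs are cross-pairs. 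Summing $\alpha_t$ times these answers over $t$ gives the desired \textsc{2Range-$f$-PairsQuery} output, at a cost of $\mathrm{polylog}(n)$ \textsc{2RangeEqPairsQuery} invocations plus $\widetilde{\Oh{n+q}}$ bookkeeping for constructing the auxiliary arrays.

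For the reverse direction I apply Theorem~\ref{thm:hamtopoly}, which (using that $f$ is non-axis-orthogonal of constant degree) gives a constant number $k=\Oh{d^2}$ of triples $(g_i,h_i,\alpha_i)$ and functions $f_i$, each either equal to $f$ or the simple product $f_i(x,y)=x\cdot y$, such that $\eqp(x,y)=\sum_{i=1}^{k}\alpha_i\cdot f_i(g_i(x),h_i(y))$. For each index $i$ with $f_i=f$, I again use the doubled-array trick: build $B_i[1..2n]$ with $B_i[i']=g_i(A[i'])$ on the left and $B_i[n+j']=h_i(A[j'])$ on the right, and invoke \textsc{2Range-$f$-PairsQuery} on ranges $[l',r']$ and $[n+l'',n+r'']$. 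For each index $i$ where $f_i$ is multiplication, the contribution to a query factors as $\bigl(\sum_{i'\in[l',r']}g_i(A[i'])\bigr)\cdot\bigl(\sum_{j'\in[l'',r'']}h_i(A[j'])\bigr)$, which is handled in $\Oh{1}$ per query after $\Oh{n}$ prefix-sum preprocessing. Summing $\alpha_i$ times these contributions recovers the \textsc{2RangeEqPairsQuery} answer, using only $\Oh{1}$ calls to \textsc{2Range-$f$-PairsQuery} per original query.

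The only subtle point is ensuring that the decomposition of Theorem~\ref{thm:polytoham} is applied with a valid universe: since all input values are polynomially bounded, $U=\mathrm{poly}(n)$ and the $\log^{d+1}U$ factor contributes only polylogarithmically; together with the $\mathrm{polylog}(n)$ summand bound on $f$ this keeps the overall overhead polylogarithmic. No real obstacle is expected — the black-box decompositions do the heavy lifting, and the only genuine reduction work is the array-doubling trick, which mirrors the ad hoc construction already used in the proof of Lemma~\ref{lem:eqclass} to separate ``left'' positions from ``right'' positions.
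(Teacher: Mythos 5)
Your proposal is correct and follows essentially the same route as the paper: Theorem~\ref{thm:polytoham} with the doubled length-$2n$ array for the forward direction, and Theorem~\ref{thm:hamtopoly} plus a prefix-sum computation for the multiplication summands in the reverse direction. (In fact your explicit use of the shifted second range $[n+l'',n+r'']$ is slightly more careful than the paper's shorthand.)
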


\begin{proof}
(\textsc{2Range-$f$-PairsQuery} $\to$ \textsc{2RangeEqPairsQuery}):\\
We apply Theorem~\ref{thm:polytoham} to $f$, and create polylogarithmically many instances of \textsc{2RangeEqPairsQuery}, the $i$-th one with a $2n$-element array $A_i$ such that $A_i[j] = g_i(A[j])$ and $A_i[n+j] = h_i(A[j])$ for every $j\in [n]$. To finish the proof, observe that
\[f_A([l',r'],[l'',r'']) = \sum_{i=1}^k \alpha_i \cdot \eqp\nolimits_{A_i}([l',r'],[l'',r'']).\]

(\textsc{2RangeEqPairsQuery} $\to$ \textsc{2Range-$f$-PairsQuery}):\\
The argument is very similar to the reduction in the reverse direction. We use Theorem~\ref{thm:hamtopoly} and create a constant number of arrays. We then solve \textsc{2Range-$f$-PairsQuery} on those arrays with $f_i = f$, and \textsc{2Range-$\mathop{mul}$-PairsQuery} on those with $f_i = \mathop{mul}$, where $\mathop{mul}(x, y) = x \cdot y$. What remains to be shown is that \textsc{2Range-$\mathop{mul}$-PairsQuery} is computationally easy. Observe that
\[\mathop{mul}([l',r'],[l'',r'']) = \sum_{\mathclap{\substack{l' \leq i \leq r' \\ l'' \leq j \leq r''}}} A[i]A[j] = \sum_{\mathclap{l' \leq i \leq r'}} A[i] \cdot \sum_{\mathclap{l'' \leq j \leq r''}} A[j] = (S[r'+1]-S[l']) \cdot (S[r''+1]-S[l'']),\]
where $S[i] = \sum_{j < i}A[j]$, and all $S[i]$'s can be precomputed beforehand. Thus, it takes only $\Oh{n+q}$ time to solve an instance of \textsc{2Range-$\mathop{mul}$-PairsQuery}, and the total running time is necessarily dominated by solving the \textsc{2Range-$f$-PairsQuery} instances.
\end{proof}

\begin{lemma}
Let $f:\Z^2\to\Z$ be any non-axis-orthogonal piecewise polynomial function of constant degree and $\mathrm{\polylog}(n)$ number of summands. For input values bounded in absolute value by $\mathrm{poly}(n)$, the problems \textsc{Range-$f$-PairsQuery} and \textsc{2Range-$f$-PairsQuery} have the same time complexity, up to polylogarithmic factors.
\end{lemma}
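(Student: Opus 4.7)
The plan is to adapt the \textsc{RangeInversionsQuery} $\leftrightarrow$ \textsc{2RangeInversionsQuery} reductions from the proof of Lemma~\ref{lem:eqclass} essentially verbatim, with the indicator $\inv$ replaced by the general function $f$. These reductions rest only on the sum-over-pairs interpretation of the range quantities and not on any arithmetic property of $\inv$; in particular, neither the piecewise polynomial hypothesis nor the input-size bound on $f$ is needed for this lemma in isolation (they appear in the statement merely for compatibility with Theorem~\ref{thm:feqclass}).

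For \textsc{2Range-$f$-PairsQuery} $\to$ \textsc{Range-$f$-PairsQuery} I will establish the inclusion-exclusion identity
\[f([l',r'],[l'',r'']) = f([l',r'']) - f([l',l''-1]) - f([r'+1,r'']) + f([r'+1,l''-1])\]
for nonoverlapping ranges with $r' < l''$, with the convention that $f([l,r]) = 0$ whenever $l > r$. The verification is short: each pair $(i,j)$ with $l' \leq i < j \leq r''$ is counted by the right-hand side with multiplicity $1$ exactly when $i \leq r'$ and $j \geq l''$, which matches the pairs summed on the left-hand side (the ordering $i < j$ is automatic from $i \leq r' < l'' \leq j$). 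Hence one 2-range query reduces to four Range-$f$ queries plus $\Oh{1}$ arithmetic.

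For \textsc{Range-$f$-PairsQuery} $\to$ \textsc{2Range-$f$-PairsQuery} I first precompute the prefix totals $F(x) \eqdef f([1,x])$ for $x = 0, 1, \ldots, n$ via the recurrence $F(0) = 0$ and $F(x) = F(x-1) + f([1,x-1],[x,x])$, costing $n$ 2Range-$f$ queries with singleton right ranges. Each input query $f([l,r])$ is then answered with a single further 2Range-$f$ query using $f([l,r]) = F(r) - F(l-1) - f([1,l-1],[l,r])$ (the last term dropped when $l = 1$). Altogether $n + q$ queries yield running time $T_2(n, n+q) + \Oh{n+q}$, which lies within a constant factor of $T_2(n, q)$ for any monotone polynomially bounded time.

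The main subtlety is making sure the extra $n$ queries from the precomputation do not spoil the claimed $(n,q)$-equivalence when $q \ll n$: for the $q = \Th{n}$ regime used in Theorem~\ref{thm:feqclass} this overhead is automatically absorbed, and more generally $n + q \leq 2 \max(n,q)$, so the reduction inflates the effective query count by at most a factor of two and the two problems indeed have the same running time up to the polylogarithmic slack asserted by the lemma.
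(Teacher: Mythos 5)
Your reduction \textsc{2Range-$f$-PairsQuery} $\to$ \textsc{Range-$f$-PairsQuery} is correct and matches the paper's (your inclusion-exclusion identity is even stated more carefully than theirs, with the $l''-1$ and $r'+1$ endpoints written out explicitly). The other direction, however, deviates from the paper in a way that leaves a genuine gap. The paper precomputes $F(x) = f([1,x])$ for all $x$ \emph{in $\Ot{n}$ time} by invoking Theorem~\ref{thm:polytoham} to write $f$ as a $\polylog$ sum of equality predicates $\eqp(g_i(\cdot), h_i(\cdot))$, then doing a single pass over $A$ maintaining multisets of already-seen $g_i(A[x'])$ values; each new $F(x)-F(x-1)$ is read off in polylogarithmic time from those multisets. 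This is precisely where the piecewise polynomial hypothesis is used, so your assertion that "neither the piecewise polynomial hypothesis nor the input-size bound on $f$ is needed for this lemma in isolation" is not correct.

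You instead obtain the $F(x)$'s by issuing $n$ extra 2Range queries $f([1,x-1],[x,x])$, bringing the total query count to $n+q$. The step "the reduction inflates the effective query count by at most a factor of two" is where the argument breaks: $n+q$ is a factor-of-two overestimate of $\max(n,q)$, not of $q$, so for $q \ll n$ the inflation in the query count is $\Theta(n/q)$, which can be polynomial. Concretely, take $T_2(n,q) = \Ot{nq^{(\omega-1)/(\omega+1)}}$ as in Theorem~\ref{thm:algorithm} and $q=1$: you would bound $T_{\text{Range}}(n,1)$ by $T_2(n,n+1) = \Ot{n^{2\omega/(\omega+1)}}$, which is polynomially larger than $T_2(n,1) = \Ot{n}$. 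Since this lemma feeds into Theorem~\ref{thm:feqclass}, whose equivalence is explicitly claimed "with the complexity measured as a function of two variables, $n$ and $q$," the multivariate tightness matters. Your reduction does establish the lemma for the $q=\Th{n}$ regime, but to get the full statement you need the paper's $\Ot{n}$-time precomputation, which is exactly what Theorem~\ref{thm:polytoham} buys.
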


\begin{proof}
(\textsc{2Range-$f$-PairsQuery} $\to$ \textsc{Range-$f$-PairsQuery}):\\
The reduction is essentially the same as the \textsc{2RangeInversionsQuery} $\to$ \textsc{RangeInversionsQuery} reduction in the proof of Lemma~\ref{lem:eqclass}. Actually, the inclusion-exclusion identity $f([a,b],[c,d]) = f([a,d]) - f([a,c]) - f([b,d]) + f([b,c])$ holds for any binary function $f$, not necessarily piecewise polynomial.

(\textsc{Range-$f$-PairsQuery} $\to$ \textsc{2Range-$f$-PairsQuery}):\\
The reduction closely follows the \textsc{RangeInversionsQuery} $\to$ \textsc{2RangeInversionsQuery} reduction in the proof of Lemma~\ref{lem:eqclass}. The inclusion-exclusion part of the argument translates verbatim, i.e.~we have $f([1,a-1],[a,b]) = f([1,b]) - f([1,a-1]) - f([a,b])$. What requires more work is to precompute the values $f([1,x])$, for all $1 \leq x \leq n$, in $\Ot{n}$ time. In order to do so, we apply Theorem~\ref{thm:polytoham} to $f$. We do a single pass over the array $A$ and, for each $i \in [k]$, we keep a multiset of already encountered values $g_i(A[x])$. During the pass, each value $f([1,x]) - f([1,x-1]) = \sum_{x'<x}f(A[x'],A[x])$ can be found in polylogarithmic time, by examining counts of $h_i(A[x])$ elements in corresponding multisets, and multiplying them by corresponding $\alpha_i$'s.
\end{proof}

\section{Equivalence with triangle counting}

\label{sec:triangle}

In this section we present reductions between \textsc{EdgeTriangleCounting} and an offline range query problem from our equivalence class. Analogous reductions establish equivalence between \textsc{EdgeTriangleDetection} and \textsc{2RangeDisjointQuery}, which we briefly discuss at the end of this section.

\RngTriEquivLem*

\begin{proof}
(\textsc{EdgeTriangleCounting} $\to$ \textsc{2RangeInversionsQuery}):\\
Given a graph, we construct an array by concatenating the lists of neighbours of all vertices. For each edge $e = (u, v)$, the number of triangles containing it, denoted by $\Delta_e$, equals  $\eqp(\mathrm{Nb}(u), \mathrm{Nb}(v))$, where $\mathrm{Nb}(a)$ denotes the interval containing the neighbour list of vertex $a$. Thus we reduce \textsc{EdgeTriangleCounting} to $m$ queries in an array of length $2m$, and can solve it in $T_=(2m)$ time.

(\textsc{2RangeEqPairsQuery} $\to$ \textsc{EdgeTriangleCounting}):\\ 
We first present a reduction producing a multigraph instance, and then explain how to eliminate parallel edges.
 
We assume $n$ is a power of 2, by appending dummy elements if necessary. Given an array of length $n$, we define the family of \textit{base intervals} as follows: for each $i = 0, 1, \ldots, \log n$, $j = 0, 1, \ldots, (n/2^i) - 1$, let $\base(i, j)$ denote the interval $[j\cdot2^i, (j+1)\cdot2^i - 1]$. This way of partitioning data is often referred to as a \textit{segment tree}. There are $2n-1$ base intervals, and their total length is $\Oh{n \log n}$.

Any interval $I$ can be split into a collection $\base_I$ of $\Oh{\log n}$ base intervals by the following recursive procedure. We start with $\base(\log n, 0) = [0, n-1]$, i.e.~the full interval. If the current interval $\base(i, j)$ is fully contained in interval $I$, we add it to the collection. Otherwise, we check if $I$ has a non-empty intersection with $\base(i-1, 2j)$ and $\base(i-1, 2j+1)$, and descend recursively into one or both of them accordingly. A conclusion that only $\Oh{\log n}$ base intervals are added to the collection (as well as that the procedure finishes in $\Oh{\log n}$ time) follows easily from the observation that, for any $i$, there can be at most two base intervals $\base(i, j)$ which have a non-empty intersection with $I$ but are not fully contained in $I$.

We create a tripartite multigraph $G=\big((U \cup V \cup W), (E_{UV} \cup E_{UW} \cup E_{VW})\big)$:
\begin{enumerate}
\item We add a vertex $u_k$ to $U$ for each value $k$ appearing in the array $A$. If a value appears multiple times, we do not create multiple copies of the vertex.
\item We add vertices $v_{i,j}$ to $V$ and $w_{i,j}$ to $W$ for each base interval $\base(i, j)$.
\item \label{step:uvuw} We add edges ($u_k$, $v_{i,j}$) to $E_{UV}$ and ($u_k$, $w_{i,j}$) to $E_{UW}$ for each value $k$ which appears in the interval $\base(i, j)$ in the array $A$. If $k$ appears multiple times, we add multiple edges, accordingly.
\item \label{step:vw} For each queried pair of intervals $(I_1, I_2)$ we compute the collections $\base_{I_1}$ and $\base_{I_2}$ and add an edge ($v_{i_1, j_1}$, $w_{i_2, j_2}$) to $E_{VW}$ for each $\base(i_1, j_1) \in \base_{I_1}$ and each $\base({i_2, j_2}) \in \base_{I_2}$.
\end{enumerate}

Since $G$ is tripartite, all of its triangles are of the form $(u_k, v_{i_1, j_1}, w_{i_2, j_2})$. Hence, for any edge $e = (v_{i_1, j_1}, w_{i_2, j_2}) \in E_{VW}$, we have $\Delta_e = \eqp(\base(i_1, j_1), \base({i_2, j_2}))$. In order to compute the results of the original queries, we sum $\Delta_e$'s for all those edges which have been induced by a given query in Step~\ref{step:vw}. This completes the reduction to a multigraph instance.

Let us analyse the size of $G$. In Step~\ref{step:uvuw}, an edge is added between each base interval and each occurrence of a value in this interval, so the total number of edges is equal to the sum of lengths of all base intervals, which is $\Oh{n \log n}$. In Step~\ref{step:vw}, for each query, both intervals are split into $\Oh{\log n}$ base intervals, and an edge is added for every pair of those base intervals. That gives $\Oh{\log^2 n}$ edges per query, $\Oh{q \log^2 n}$ edges in total. Therefore, the total size of $G$ is $\Oh{n \log n + q \log^2 n}$.

Note that multiple copies of an edge in $E_{VW}$ can be simply ignored. They occur when several queries happen to ask about the same pair of base intervals in their decompositions, but then a single copy of such an edge is sufficient to answer all those queries.

Now let us eliminate the remaining parallel edges. We represent all edge multiplicities in binary. For every $i \in [\log n]$, we create sets $E_{UV}^i$ and $E_{UW}^i$ containing those edges from $E_{UV}$ and $E_{UW}$ whose multiplicities have $1$ in the $i$-th position in their binary representation. For every pair $(i, j) \in [\log n]^2$, we create a (simple) graph $G^{i,j}=\big((U \cup V \cup W), (E_{UV}^i, E_{UW}^j, E_{VW})\big)$, and feed it to the triangle counting algorithm to compute $\Delta_e^{i, j}$ for each $e\in E_{VW}$. Finally, we obtain each $\Delta_e$ as $\sum_{i, j} 2^{i+j}\Delta_e^{i, j}$. This reduces the multigraph instance $G$ to $\log^2 n$ (simple) graph instances, each no larger than $G$. 

The whole reduction runs in time linear in the size of instances it produces. Thus, the total time to solve $\textsc{2RangeEqPairsQuery}$ is dominated by the calls to the triangle counting algorithm, and equals to $\log^2 n \cdot T_\Delta\big(\Oh{n \log n + q \log^2 n}\big)$, which is $\Ot{T_\Delta(n)}$ for $q=n$.
\end{proof}

\BoolRngTriEquivLem*

\begin{proof}[Proof sketch]
Constructions from the above proof of Lemma~\ref{lem:rngtri} work verbatim. A slight simplification is possible: Multiple parallel edges can be simply ignored, since they change only the number of triangles, not their existence. This saves a factor of $\log^2 n$.
\end{proof}

\section{Triangle detection and listing}

\label{sec:listing}

In this section we prove the equivalence between \textsc{TriangleListing} and \textsc{EdgeTriangleDetection}. Since the latter problem reduces to \textsc{EdgeTriangleCounting} trivially, this also establishes the relationship of our equivalence class to \textsc{TriangleListing}.

We start with the version of \textsc{TriangleListing} restricted to the $t=m$ regime.
\LstToDetThm*

\begin{proof}
Given a graph $\widetilde{G} = (\widetilde{V}, \widetilde{E})$, with $m=|\widetilde{E}|$ edges, we create a tripartite graph $G = (V_1 \cup V_2 \cup V_3, E)$ as follows: for every vertex $v\in \widetilde{V}$ we create three vertices $v_1\in V_1, v_2\in V_2, v_3\in V_3$ and for every edge $(u, v)\in \widetilde{E}$ we create six edges $(u_i, v_j)\in E$ for $i, j\in\{1,2,3\}, i \neq j$. Let $E = E_{1,2} \cup E_{*,3}$ where $E_{1,2}$ is the set of edges connecting $V_1$ with $V_2$ and $E_{*,3}$ is the set of edges connecting $V_1 \cup V_2$ with $V_3$. We partition $G$ into its connected components, and denote the $i$-th component by $G^i = (V^i, E^i)$, $V^i=V_1^i \cup V_2^i \cup V_3^i$, $E^i = E_{1,2}^i \cup E_{*,3}^i$. We will list $t = 6m$ triangles in $G$. Since every triangle in $\widetilde{G}$ has exactly $6$ copies in $G$, this allows us to retrieve at least $m$ unique triangles in $\widetilde{G}$.

\paragraph{Procedure.}
We iterate the whole following procedure in a loop, until the stopping condition specified in the last paragraph holds.

First, for each connected component $G^i$, we check whether $|V_3^i| = 1$. If so, we keep it unchanged. Otherwise, we are going to replace it with two new components $G^{i_1}$ and $G^{i_2}$. To do that, we arbitrarily split $V_3^i$ into two sets $V_3^{i_1}$ and $V_3^{i_2}$ of roughly equal size. We construct a new component $G^{i_1}$ as follows: We create copies of $V_1^i$, $V_2^i$ and $E_{1,2}^i$ and add them to $G^{i_1}$. We add $V_3^{i_1}$ to $G^{i_1}$. We take all those edges from $E_{*,3}^i$ which are incident to $V_3^{i_1}$, and add them to $G^{i_1}$. We construct $G^{i_2}$ analogously. We remove $G^i$ from $G$ and add $G^{i_1}$ and $G^{i_2}$ in its stead.

Note that due to these transformations the cardinality of $E_{1,2}$ might have increased by no more than a factor of 2, while the cardinality of $E_{*,3}$ hasn't changed. After all the components are examined, and possibly split, we solve \textsc{EdgeTriangleDetection} on~$G$. Then, we remove all the edges from $E_{1,2}$ which turn out not to form any triangles. If after this step it holds that $|E_{1,2}| > t$, we additionally remove arbitrary edges from $E_{1,2}$ until $|E_{1,2}| = t$. We also remove any vertices which become isolated in the process.

If there still exists a component with $|V_3^i| > 1$, we repeat the procedure. Otherwise, $|V_3^i| = 1$ for each component $i$, and therefore for each edge in $E_{1,2}$ there is exactly one vertex candidate which can form a triangle with it. Conversely, each edge still left in $E_{1,2}$ is guaranteed to form at least one triangle. We therefore list all the triangles by performing a single pass over the edges from $E_{1,2}$, and terminate.

\paragraph{Analysis.}
Observe that the only step of the algorithm in which some triangles can be lost is when we remove arbitrary edges due to the condition $|E_{1,2}| > t$. If we reach this step, however, we are guaranteed that each of the remaining $t$ edges from $E_{1,2}$ participates in at least one triangle. We conclude that if the original graph has at least $t$ triangles, each step of the algorithm preserves at least $t$ of them, while otherwise all the triangles are preserved.

The number of iterations of the procedure is $\Oh{\log{|V_3|}}$ as the cardinality of the largest $V_3^i$ decreases roughly by half with each iteration. This is $\Oh{\log{m}}$ by the assertion of there being no isolated vertices. The cardinality of $E_{*,3}$ is initially $4m$ and remains unchanged across the iterations, as each edge from $E_{*,3}^i$ is added to exactly one of the newly created components when a component gets split. Moreover, initially $|E_{1,2}| = 2m$ and then at the end of every iteration we explicitly ensure that $|E_{1,2}| \leq t = 6m$. Hence, the total number of edges at the beginning of every iteration is $\Oh{m}$, while within the iteration it can increase by no more than a factor of 2. The cost of each call to \textsc{EdgeTriangleDetection} is thus $\Oh{T(m)}$, while the rest of the procedure takes $\Oh{m}$ time per iteration. This yields the overall running time of $\Oh{T(m)\log{m}}$.
\end{proof}

\DetToLstThm*

\begin{proof}
First, observe that it is enough to solve a special case of \textsc{EdgeTriangleDetection}: assuming that $G$ is a tripartite graph $G = (V_1 \cup V_2 \cup V_3, E)$ and only detecting edges between $V_1$ and $V_2$ that are part of some triangle. Indeed, any other graph $\widetilde{G} = (\widetilde{V}, \widetilde{E})$ can be treated as in Theorem~\ref{thm:lstdet}: we create a tripartite graph $G = (V_1 \cup V_2 \cup V_3, E)$ where $V_j = \{v_j \mbox{ for each } v \in V\}$ and $E = \{(u_i, v_j): (u,v) \in \widetilde{E}, i, j\in\{1,2,3\}, i \neq j\}$. Then we call the special case algorithm for \textsc{EdgeTriangleDetection} for $G$ to detect edges between $(V_1, V_2)$. As every edge $(u,v) \in \widetilde{G}$ corresponds to an edge $(u_1,v_2) \in V_1 \times V_2$ and every triangle $(u,v,w)$ corresponds to a triangle $(u_1,v_2,w_3) \in G$, we will detect all the right edges. 

The algorithm itself is rather simple: for every $s = \log{m}, \log{m}-1, \ldots, 0$ we execute a \emph{phase} of the algorithm, which itself consists of calling  \textsc{TriangleListing} $2 \log{m}$ times. Each call is made on an induced subgraph $G^* =
(V_1 \cup V_2 \cup V_3^*, E^*)$ where the subset $V_3^* \subseteq V_3$ is created by picking every vertex from $V_3$ independently at random with probability $p = 2^{-s}$. We assume that \textsc{TriangleListing} can list up to $100m$ triangles, even if $G^*$ has less edges -- we can easily achieve that by adding dummy vertices and edges. Every edge $(v_1, v_2) \in V_1 \times V_2$ which is detected as being part of a triangle is subsequently removed from $G$. The key idea is that the phase for a given value of $s$ detects w.h.p.~all the edges with at least $2^s$ triangles, and because we promptly remove them, they cannot interfere with next phases, thus keeping the number of triangles low.

Let us formalize this idea. Recall that $\Delta_e$ denotes the number of triangles containing edge $e$. Let us define $A_s = \{e \in E\cap(V_1 \times V_2) : 2^s \leq \Delta_e < 2^{s+1}\}$ for any integer $0 \leq s \leq \log{m}$. To complete the proof, we will show the following statement: 

\emph{If all edges in $A_{s'}$ sets for $s' > s$ are detected and removed from $G$ before the $s$ phase, then, with probability $1-\frac{1}{m}$, all (previously undetected) edges in $A_s$ are detected in that phase.}

Suppose that there are indeed no more edges from $A_{s'}$ for $s' > s$. For any remaining edge $e \in \bigcup_{r\leq s} A_{r}$ the expected number of triangles in $G^*$ containing $e$ is no larger than $2^{-s}\Delta_e \leq 2^{-s} \cdot 2^{s+1} = 2$. The expected number of all triangles in $G^*$ is then no larger than $2m$, so the chance of not listing all triangles (recall that we can find up to $100m$ of them) is at most $1/50$, by the Markov inequality.
Let us now pick any edge $\gamma \in A_s$ and prove that it is likely to be detected. There are at least $2^s$ triangles with $\gamma$. For any such triangle the chance of it not appearing in $G^*$ is $1 - 2^{-s}$, and all are chosen independently, so the chance of all of them being left out is at most $(1 - 2^{-s})^{2^s} < 1/e$. From these facts we deduce that in every iteration of \textsc{TriangleListing}, $\gamma$ is detected with probability at least $1 - 1/e - 1/50 > 1/2$. With $2 \log m$ iterations we have no more than $\frac{1}{m^{2}}$ chance of missing a single edge and thus, by union bound, no more than $\frac{1}{m}$ chance of missing any edge from $A_s$, as $|A_s| \leq m$. 

This completes the proof, as the total chance that at least one phase fails is not larger than $\log m \cdot \frac{1}{m} \leq 1/2$. The running time of this algorithm is $\Oh{T(m)\log^2{m}}$. Note that we can easily detect a failure, using one extra call to \textsc{TriangleListing} with $t = 1$ at the very end of the algorithm to check whether any edge remained undetected. This makes our algorithm Las Vegas instead of Monte Carlo.
\end{proof}

\subsection{Reduction from \texorpdfstring{$t \geq m$}{t⩾m} to \texorpdfstring{$t=m$}{t=m}}

In this section we assume that there is an algorithm, denoted by BasicListingAlgorithm, which lists up to $m$ triangles in any graph $G$ with $m$ edges in time $T(m) = \Ot{m^c}$. Our goal is to use it to list up to $t \geq m$ triangles in $\Ot{m^{3c-3}t^{3-2c}}$ time.

Let $\zeta > 100$ be a big enough constant, to be determined later. If $t \leq \zeta \cdot m$, we can use BasicListingAlgorithm: we simply add $t - m \leq \zeta \cdot m$ dummy edges to $G$, which allows us to list all $t$ triangles with only a constant factor overhead in the running time. Therefore, from now on we assume that $t > \zeta \cdot m$.

Let $t^*$ denote the number of triangles in $G$. Note that we do not know $t^*$ beforehand. First we are going to deal with the case $t^* \leq t$. We present an algorithm (called InnerListingAlgorithm), which lists all triangles in a graph, provided there are at most $t$ of them. If there are more triangles in the input graph, the algorithm outputs some subset of them, with no guarantees as to the subset's size. In either case, the running time depends only on the input value $t$, and not on the actual number of triangles~$t^*$. The main idea is to invoke BasicListingAlgorithm on random subgraphs which, roughly speaking, contain no more triangles than they have edges. This is achieved by randomly coloring vertices with a carefully chosen number of colors, and iterating over tripartite graphs obtained by taking vertices of all triples of colors.

Then, we introduce MainListingAlgorithm, which  deals with the possibility of actual number of triangles being larger than $t$. It does so in a very simple way: picking vertices at random with some probability $p$ and calling InnerListingAlgorithm on the obtained subgraph. We do several iterations with various $p$, in order to make sure that the inner algorithm is likely to be invoked at least once on a subgraph of $G$ with $\Th{t}$ triangles.

\subsubsection*{Case $t^* \leq t$: InnerListingAlgorithm}

Let $r = \frac{t}{m} > \zeta$. As mentioned before, we are going to randomly assign $r$ colors to vertices, and iterate over tripartite graphs, one for every triple of colors. The expected number of triangles in such a subgraph should match its number of edges.

First, however, we need to preprocess the graph to get rid of high-degree vertices. For every vertex $v \in G$ with $\deg{v} > \frac{m}{r}$, and for every edge $(u, w) \in E$, the algorithm checks whether $(v, u, w)$ is a triangle and possibly outputs it. After this, $v$ is removed from $G$. There are no more than $2r$ such vertices, so this step works in $\Oh{r \cdot m} = \Oh{t}$ time. From now on we may assume that the degree of any vertex in $G$ is at most $\frac{m}{r}$. 

Let $q = \frac{m}{r^2} = \frac{t}{r^3} = \frac{m^3}{t^2}$. The algorithm repeats $\Oh{\log m}$ times the following pattern:

 \begin{enumerate}
  \item Assign to every $v \in V$ a random color $c_v \in \{1, 2, \ldots, r\}$.
  \item For each of the ${\binom{r}{3}}$ triples of distinct colors consider the tripartite graph $\hat{G} = (\hat{V}, \hat{E})$, which only retains the vertices of these three colors, and the edges between any distinct two of them.
   \begin{itemize}
        \item If $|\hat{E}| > \zeta \cdot q$, output nothing. (For the sake of analysing the algorithm later, we mark such a triple as \emph{failed}.)
        \item If $|\hat{E}| \leq \zeta \cdot q$, use BasicListingAlgorithm to list and output up to $\zeta \cdot q$ triangles of $\hat{G}$. (If there are exactly $\zeta \cdot q$ listed triangles, mark such a triple as \emph{vulnerable}.) 
   \end{itemize}
 \end{enumerate}
 
\begin{lemma}
\label{lem:t-listing-inner}
For any $G$ with $m$ edges and $t^*$ triangles, if $t^* \leq t$, then InnerListingAlgorithm lists all triangles in $G$ with high probability. If $t^* > t$, some subset of triangles is listed. The running time of the algorithm is always $\Ot{(\frac{t}{m})^3 \cdot T(\frac{m^3}{t^2})}$.
\end{lemma}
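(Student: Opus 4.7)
The plan is to handle three parts of the lemma: the running time, the trivial $t^{*}>t$ case, and the completeness claim when $t^{*}\leq t$. The $t^{*}>t$ case is immediate because the algorithm only ever outputs true triangles of $G$, so ``some subset of triangles is listed'' holds by construction.

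For the running time, I would sum four contributions. The brute-force preprocessing over high-degree vertices costs $\Oh{rm}=\Oh{t}$, since there are at most $2r$ vertices with $\deg>m/r$. In each of the $\Oh{\log m}$ outer repetitions, I distribute edges into the (at most $r-2$) color triples containing both their endpoint colors at cost $\Oh{mr}=\Oh{t}$, and then invoke BasicListingAlgorithm on each of the $\binom{r}{3}=\Oh{(t/m)^3}$ triples whose subgraph has at most $\zeta q$ edges, at total cost $\Oh{(t/m)^3\cdot T(m^3/t^2)}$. The trivial bound $T(q)\geq q$ absorbs the preprocessing and distribution overheads, so summing over $\Oh{\log m}$ repetitions yields the claimed $\Ot{(t/m)^3 T(m^3/t^2)}$ bound.

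For completeness when $t^{*}\leq t$, I would fix an arbitrary triangle $\tau=(u,v,w)$ and show that in a single repetition $\tau$ is listed with probability at least a constant (say $1/2$) for $\zeta$ and $r$ large enough; then independence across the $\Oh{\log m}$ repetitions and a union bound over the at most $\Oh{m^{3/2}}$ triangles of $G$ deliver the high-probability guarantee. Three events must co-occur: (E1) $u,v,w$ receive pairwise distinct colors, which has probability $(r-1)(r-2)/r^2=1-\Oh{1/r}$; (E2) the induced tripartite subgraph $\hat{G}$ on the colors $\{c_u,c_v,c_w\}$ has at most $\zeta q$ edges, so the triple is not failed; and (E3) $\hat{G}$ contains at most $\zeta q$ triangles, so that the output cap of BasicListingAlgorithm does not exclude $\tau$. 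Conditioning on $(c_u,c_v,c_w)$, the expected edge count of $\hat{G}$ is $6m/r^2+\Oh{q}=\Oh{q}$, using the degree cap $m/r$ from preprocessing, and the expected triangle count is $\Oh{t^{*}/r^3}=\Oh{q}$ since $t^{*}\leq t$ and hence $t/r^3=q$, plus the boundary contributions analyzed below. Markov's inequality then bounds the failure probabilities of (E2) and (E3) by $\Oh{1/\zeta}$ each, so a sufficiently large constant $\zeta$ drives the total bad-event probability below $1/2$.

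The main obstacle I anticipate is the bookkeeping for the (E3) expectation: a triangle $\tau'$ sharing two vertices with $\tau$ lies in $\hat{G}$ with probability $1/r$ (the third vertex must receive the one remaining color), and there are at most $\Oh{m/r}$ such $\tau'$ per shared pair, contributing $\Oh{m/r^2}=\Oh{q}$; a triangle sharing a single vertex with $\tau$ lies in $\hat{G}$ with probability $\Oh{1/r^2}$, and there are at most $\binom{m/r}{2}=\Oh{m^2/r^2}$ such triangles per shared vertex, contributing $\Oh{m^2/r^4}=\Oh{q}$ precisely because $q=m/r^2\leq m$. These boundary estimates critically use the degree cap obtained in preprocessing together with the identity $q=m/r^2$, and verifying that none of them overwhelm the main $\Oh{t^{*}/r^3}$ term is the one subtle step where the preprocessing and the choice of $r$ must interact.
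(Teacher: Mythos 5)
Your overall plan matches the paper's: bound the per-iteration work by $\Oh{r^3\,T(q)}$, observe the $t^*>t$ case is trivial, fix a triangle $\tau$, show it survives a single iteration with constant probability via Markov on the edge and triangle counts of $\hat G$, then amplify over $\Th{\log m}$ iterations and union-bound. The running-time accounting and the (E1), (E2) estimates are fine, as is the $\Oh{m/r^2}=\Oh{q}$ contribution for two-shared-vertex triangles.

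However, there is an arithmetic error in your (E3) bookkeeping for triangles sharing exactly \emph{one} vertex with $\tau$. You bound their count per shared vertex by $\binom{m/r}{2}=\Oh{m^2/r^2}$ using the degree cap, and multiply by the $\Oh{1/r^2}$ survival probability to get $\Oh{m^2/r^4}$, which you assert equals $\Oh{q}$. But $m^2/r^4=(m/r^2)^2=q^2$, and $q^2=\Oh{q}$ only when $q=\Oh{1}$, i.e.~$t\gtrsim m^{3/2}$ --- a degenerate regime. Your justification ``because $q=m/r^2\leq m$'' does not bridge this gap. The fix is to use the other available upper bound on triangles through a fixed vertex $x$: each such triangle is determined by an edge of $G$ among $N(x)$, so there are at most $m$ of them per vertex. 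That gives $3m\cdot\Oh{1/r^2}=\Oh{m/r^2}=\Oh{q}$, which is what the paper does. With that replacement your expected triangle count in $\hat G$ becomes $\Oh{t^*/r^3+m/r^2+m/r^2+1}=\Oh{q}$, and the rest of your argument goes through.
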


\begin{proof}
The running time of each of $\Oh{\log^2 m}$ iterations is clearly $\Oh{r^3 \cdot T(q)} = \Oh{(\frac{t}{m})^3 \cdot T(\frac{m^3}{t^2})}$, regardless of the number of triangles in $G$. Observe that InnerListingAlgorithm only lists triangles which appear in $G$, so the output is always a subset of all triangles.

Now we assume that there are no more than $t$ triangles in $G$ and prove that all will be listed with high probability. To do this, we pick an arbitrary triangle $\Delta = (x,y,z)$ of $G$ and prove that in any single iteration it is listed with probability at least $1/2$. If we do so, we can conclude that after $\Th{\log m}$ iterations, the probability of $\Delta$ not being listed goes down to $\frac{1}{\mathop{poly}(m)}$, with arbitrarily large $\mathop{poly}(m)$. As there are no more than $m^2$ triangles, with high probability we did not miss anything.
 
Not listing a triangle $\Delta=(x,y,z)$ can happen for one of the three reasons stated below. It is enough to prove that with the constant $\zeta$ large enough, the probability of each of them is less than $1/6$.
 \begin{enumerate}[C{a}se 1.]
  \item The colors $c_x, c_y, c_z$ are not distinct. The chance of $c_x$ being equal to $c_y$ is $1/r$, so the total chance of this bad event is no more than $\frac{3}{r} < \frac{3}{\zeta}$.
  \item The colors $c_x, c_y, c_z$ are distinct, but the triple $(c_x, c_y, c_z)$ is a \emph{failed} triple. Recall that we denote the subgraph of the chosen colors by $\hat{G} = (\hat{V}, \hat{E})$. Let us then compute the expected value of $|\hat{E}|$. For every edge $e$ with no endpoint in $\{x,y,z\}$, the probability of $e \in \hat{E}$ is $\frac{6}{r^2}$, as its endpoints must receive two distinct colors from  $\{c_x, c_y, c_z\}$. For every edge adjacent to $x$, the probability is $\frac{2}{r}$, as the other endpoint must be colored with either $c_y$ or $c_z$. There are at most $m$ edges outside of $\Delta$ and at most $\frac{3m}{r}$ adjacent to it, as every vertex has degree at most $\frac{m}{r}$; there are also three edges of $\Delta$. By linearity of expectation, the total expected number of edges in $\hat{G}$ does not exceed $3+\frac{12m}{r^2} = 12q+3 \leq 13q$. Therefore the probability of not listing $\Delta$ because of too many edges is, by the Markov inequality, no more than $\frac{13}{\zeta}$.
  \item The colors $c_x, c_y, c_z$ are distinct, but the triple $(c_x, c_y, c_z)$ is a \emph{vulnerable} triple, having more than $\zeta \cdot q$ triangles, which can lead to missing $\Delta$. This time, we calculate the expected number of triangles in $\hat{G}$. Let us split all the original triangles of $G$ into four classes: 
  \begin{itemize}
   \item The triangles disjoint with $\Delta$. Such a triangle appears in $\hat{G}$ if it receives some permutation of colors $(c_x, c_y, c_z)$ for its vertices -- the chance is $\frac{6}{r^3}$ and there are at most $t$ such triangles.
   \item The triangles with exactly one vertex common with $\Delta$. Assume that the common vertex is $x$. For any such triangle $(x,u,v)$ we know that $(u,v)$ is an edge in $G$ which is colored with $(c_y,c_z)$ or $(c_z,c_y)$. The probability of this is $\frac{2}{r^2}$ and there are $3m$ such triangles, as every one is uniquely determined by a vertex of $\Delta$ and an edge of $G$.
   \item The triangles with exactly two vertices of $\Delta$. Assume $(x,y,v)$ to be such a triangle. Then $v$ must be a neighbour of $x$ (there are at most $\deg x \leq \frac{m}{r}$ of them), and receive color $c_z$. The probability is $\frac{1}{r}$, and there are no more than $\frac{3m}{r}$ such triangles.
   \item The single triangle $\Delta$.
  \end{itemize}

  The total expected number of triangles is $\frac{6t}{r^3} + \frac{6m}{r^2} + \frac{3m}{r^2} + 1 \leq 16q$. Using Markov's inequality again, we deduce that the probability of the triple being vulnerable cannot exceed $\frac{16}{\zeta}$.
 \end{enumerate}
\end{proof}

\subsubsection*{General case: MainListingAlgorithm}
 
 The main algorithm uses InnerListingAlgorithm as a subroutine, invoking it for random subgraphs of $G$ of increasing sizes. We are going to ask InnerListingAlgorithm to list $32t$ triangles. Note that this does not increase the asymptotic running time. Moreover, the inner algorithm's internal assumption -- that the triangles to edges ratio exceeds $\zeta$ -- still holds, since we only increase the desired number of triangles and decrease the number of edges in a subgraph. As we shall see, one of the subgraphs is expected to contain between $t$ and $32t$ triangles, and thus w.h.p.~one of the calls gives us the desired answer.
 
Let $G = (V,E)$ be the input graph. Let $\mathcal{T}$ be the set of all listed triangles, initially $\mathcal{T} = \varnothing$. We consider  every $s = 0, 1, \ldots, \log m$, and for each of these values, we execute the following subroutine:
  \begin{enumerate}
   \item Choose a subset $\widetilde{V} \subseteq V$ by taking every vertex $v \in V$ with probability $p = 2^{-s}$. Let $\widetilde{G}$ be the subgraph induced by $\widetilde{V}$.
   \item Call InnerListingAlgorithm on $\widetilde{G}$ to list $32t$ triangles, add all listed triangles to $\mathcal{T}$.
   \item If $|\mathcal{T}| \geq t$, stop the algorithm.
  \end{enumerate}

\begin{lemma}
\label{lem:t-listing-outer}
With at least $3/4$ probability, there is at least one call of InnerListingAlgorithm with $\widetilde{G}$ having between $t$ and $32t$ triangles.
\end{lemma}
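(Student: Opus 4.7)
The plan is to pick one ``right'' sampling scale $s^{*}$ and argue that, with probability at least $3/4$, the triangle count $X_{s^{*}}$ of the induced subgraph $\widetilde G$ at that scale lies in $[t,32t]$. Let $t^{*}$ denote the total number of triangles in $G$. If $t^{*}\in[t,32t]$, the $s=0$ iteration uses $\widetilde G=G$ and so $X_0=t^{*}$ satisfies the event deterministically; if $t^{*}<t$ the event is vacuous, but the outer algorithm already wins at $s=0$ because InnerListingAlgorithm lists all $\le t$ triangles with high probability (Lemma~\ref{lem:t-listing-inner}). It thus suffices to handle $t^{*}>32t$. Since $\ev{X_s}=p^{3}t^{*}=8^{-s}t^{*}$ decreases by a factor of $8$ per increment of $s$, starting from $t^{*}>32t$ and falling to $\ev{X_{\log m}}\le t^{*}/m^{3}<1$, I can choose $s^{*}$ so that $\ev{X_{s^{*}}}\in[t,8t)$. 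Markov's inequality then yields the upper tail immediately: $\pr{X_{s^{*}}>32t}\le \ev{X_{s^{*}}}/(32t)\le 1/4$.

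The lower-tail bound $\pr{X_{s^{*}}<t}$ is the core of the proof. My plan is a second-moment argument: writing $X_{s^{*}}=\sum_{\tau}X_{\tau}$ for triangle-indicator variables $X_{\tau}=\bbone[\tau\subseteq\widetilde V]$ and splitting the variance by the overlap $k=|V(\tau_1)\cap V(\tau_2)|$ of ordered triangle pairs yields
\[
  \var{X_{s^{*}}}\;\le\;p^{3}t^{*}+p^{4}\sum_{e}\Delta_{e}^{2}+p^{5}\sum_{v}T_{v}^{2},
\]
where $T_{v}$ is the number of triangles through vertex $v$. Using the crude bounds $\max_{e}\Delta_{e}\le n\le m$ (each edge has at most $n$ common-neighbour triangles) and $\max_{v}T_{v}\le m$ (the triangles through $v$ are in bijection with edges of $G[N(v)]$, of which there are at most $m$), together with $\sum_{e}\Delta_{e}=3t^{*}$, $\sum_{v}T_{v}=3t^{*}$, and the standing outer-algorithm assumption $t\ge m$, I would bound $\var{X_{s^{*}}}/\ev{X_{s^{*}}}^{2}=\Oh{1}$; Paley--Zygmund (or Chebyshev with a slightly larger target mean) then produces a constant lower bound on $\pr{X_{s^{*}}\ge t}$. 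The principal obstacle is exactly this variance estimate, in particular controlling $\sum_{e}\Delta_{e}^{2}$ against the square of the mean; the ample slack ``$t$ versus $32t$'' in the statement is precisely what makes the approach go through. If the per-scale constant does not yet meet $3/4$, I would exploit the independence of the $\log m+1$ trials: at the $\Oh{1}$ scales whose expectation lies in or near the target window the failure probabilities multiply, and a union bound over the remaining scales (whose expectations are so far from the window that their failure probability is essentially $1$ anyway) yields the required $3/4$.
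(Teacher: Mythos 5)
Your high-level plan — fix one scale $s^*$ with $\ev{X_{s^*}}$ in a constant window around $t$ and run a second-moment argument, handling $t^* \leq 32t$ separately — is the same as the paper's. Your variance decomposition is also sound: the disjoint pairs have zero covariance, and the three remaining cases are exactly what you list; bounding $\sum_e\Delta_e^2$ and $\sum_v T_v^2$ via the $\ell_\infty\text{-}\ell_1$ bounds $\max_e\Delta_e\le n$, $\max_v T_v\le m$, $\sum_e\Delta_e=\sum_v T_v=3t^*$ gives $\var{X}\le\mu(1+6m)$, which is the same order as the paper's $\var{X}\le\mu(m+1)$ (obtained there via a slicker one-to-one mapping of overlapping pairs into $\mathcal{T}^*\times E$ at probability cost $p^4$).

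There are, however, two genuine problems in the execution. First, the standing assumption you invoke is wrong: you write ``$t\ge m$,'' but the algorithm in this section has already reduced to the regime $t>\zeta\cdot m$ with $\zeta>100$ (and $\zeta$ is a free constant). Under the weak assumption $t\ge m$, your bound gives only $\var{X}/\mu^2=\Oh{1}$ with a constant around $6$, and then neither Chebyshev nor Paley–Zygmund yields a failure probability below $1/4$. Your proposed rescue — amplify across the $\Oh{\log m}$ independent scales — does not help, because $\ev{X_s}$ drops by a factor of $8$ per scale, so only one or two scales can land in (or near) $[t,32t]$; with a per-scale success probability around $0.1$–$0.15$, two independent trials still fail with probability well above $1/4$. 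The paper avoids all this by using $\mu\ge 2t\ge 16(m+1)$ to get $\var{X}/\mu^2\le 1/16$; your route works too, but you must invoke $t>\zeta m$ with $\zeta$ large, not $t\ge m$. Second, your scale choice $\mu\in[t,8t)$ pins the lower end of the target window at the mean, which makes the lower-tail estimate degenerate (both Chebyshev for $\{X<t\}$ and Paley–Zygmund for $\{X\ge t\}$ collapse as $\mu\to t$); you acknowledge this parenthetically but do not resolve it. The paper instead picks $\mu\in[2t,16t]$ so that both $\{X<t\}$ and $\{X>32t\}$ are subsets of $\{|X-\mu|>\mu/2\}$, and a single two-sided Chebyshev bound finishes the proof. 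With those two corrections — $\mu\in[2t,16t]$ and the correct standing assumption — your argument goes through; as written, it does not.

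One smaller point: your initial case analysis is a nice clarification (the paper implicitly assumes $t^*\ge 2t$ for the claim ``at least one iteration satisfies $2t\le p^3|\mathcal{T}^*|\le 16t$''), but note that when $t^*<t$ the lemma's event literally has probability $0$, so strictly speaking one should say the lemma is only needed/claimed in the regime $t^*\ge 2t$ rather than calling it ``vacuous.''
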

\begin{proof}
 Denote by $\mathcal{T}^*$ the set of all triangles in $G$. In every iteration of MainListingAlgorithm we pick vertices of $G$ with probability $p = 2^{-s}$. At least one iteration must satisfy $2t \leq p^3|\mathcal{T}^*| \leq 16t$, so let us consider this very iteration. Let $X = |\widetilde{\mathcal{T}^*}|$ be the random variable that counts the number of triangles in $\widetilde{G}$, and let us analyze the values of $\ev{X}$ and $\var{X}$. 
 
 By linearity of expectation, we have $\ev{X} = \sum_{\Delta \in \mathcal{T}^*} \pr{\Delta \in \widetilde{\mathcal{T}^*}} = p^3 |\mathcal{T}^*|$. Denote this value by $\mu$. To compute $\var{X} = \ev{X^2} - \ev{X}^2$, observe that $\ev{X^2}$ is the expected number of ordered pairs of triangles in $\widetilde{G}$, and thus it is equal to $\sum_{(\Delta_1, \Delta_2) \in \mathcal{T}^*\times\mathcal{T}^*} \pr{\Delta_1 \in \widetilde{\mathcal{T}^*} \wedge \Delta_2 \in \widetilde{\mathcal{T}^*}}$. Consider three cases, depending on the number of common vertices of $\Delta_1$ and $\Delta_2$:
 \begin{enumerate}[C{a}se 1.]
  \item For $\Delta_1$ and $\Delta_2$ disjoint, there are no more than $|\mathcal{T}^*|^2$ such pairs, and the chance for such a pair to appear is $p^6$.
  \item For $\Delta_1$ and $\Delta_2$ having one or two common vertices, we show a one-to-one mapping from such pairs to $\mathcal{T}^* \times E$. If $\Delta_1 = (a,b_1,c_1)$ and $\Delta_2 = (a,b_2,c_2)$, we map $(\Delta_1, \Delta_2)$ to $(\Delta_1, (b_2,c_2))$. If $\Delta_1 = (a,b,c_1)$ and $\Delta_2 = (a,b,c_2)$, we map $(\Delta_1, \Delta_2)$ to $(\Delta_1,(a,c_2))$. It is easy to see that we can always reconstruct $(\Delta_1, \Delta_2)$ from its assigned pair (thus confirming it is a one-to-one mapping), and that for $(\Delta_1, \Delta_2)$ to appear we need at least $\Delta_1$ and $c_2$ to be in $\widetilde{G}$. Thus there are at most $|\mathcal{T}^*| \cdot m$ such pairs, and the chance for a specific one to appear is at most $p^4$.
  \item For $\Delta_1 = \Delta_2$, there are $|\mathcal{T}^*|$ such pairs, with $p^3$ chance for each of them to appear.
 \end{enumerate} 
  We can now bound $\ev{X^2}$ from above by $p^6|\mathcal{T}^*|^2 + p^4|\mathcal{T}^*|m + p^3|\mathcal{T}^*| \leq \mu^2 +  \mu \cdot (m+1)$. Therefore, $\var{X} \leq \mu \cdot (m+1) \leq \frac{1}{16}\mu^2$, as $\mu \geq 2t \geq 16m+16$. Using Chebyshev inequality we deduce that:
  
  $$\pr{X<t \vee X>32t} \leq \pr{|X-\mu|>\frac{1}{2}\mu} \leq \frac{1}{4}.$$

This means we have no less than $3/4$ chance of calling InnerListingAlgorithm on a graph $\widetilde{G}$ with at least $t$ and at most $32t$ triangles. By Lemma \ref{lem:t-listing-inner} this will, in turn, yield $t$ triangles with high probability (in particular, with at least $3/4$ chance). This means that the probability of any of two algorithms being wrong does not exceed $1/2$.
\end{proof}

To complete the analysis, observe that MainListingAlgorithm makes $\Oh{\log^2 m}$ calls of InnerListingAlgorithm, each one of them taking $\Ot{(\frac{t}{m})^3 \cdot T(\frac{m^3}{t^2})}$ time. As stated in Lemma~\ref{lem:t-listing-outer}, one of the iterations succeeds, with high probability, in listing at least $t$ triangles. Plugging in $T(m)=m^c$ we obtain the declared result:

\LstToLstThm*

\section{Online algorithm and multivariate analysis}

\label{sec:alg}

In this section we present an online algorithm for \textsc{RangeEqPairsQuery}. Our goal is to match, for $q=\Th{n}$, the offline runtime following from Theorem~\ref{thm:eqclass}, and improve upon it for $q$ significantly different than $n$. Thanks to Lemma~\ref{lem:eqclass} and its generalization Theorem~\ref{thm:feqclass}, we automatically obtain online algorithms -- with the same running time, up to polylogarithmic factors -- for all the range query problems in our equivalence class.

For a parameter $\beta \in (0, 1)$, to be determined later, we split the input array into $n^\beta$ consecutive blocks, each consisting of $n^{1-\beta}$ consecutive elements. First, we aim to compute a matrix $B$, of size $n^\beta \times n^\beta$, such that $B[i][j]$ equals to the number of pairs of equal elements, the first element of a pair in the $i$-th block, the second element in the $j$-th block.

If a value appears in the input array at least $n^{1-\gamma}$ times (for a parameter $\gamma \in (0, 1)$ to be determined later), we call it \emph{frequent}, and otherwise we call it \emph{rare}. We will separately compute the contribution of the frequent and rare elements to the matrix $B$.

Note that there are at most $n^\gamma$ different frequent values. We construct a matrix $M$ of size $n^\beta \times n^\gamma$ such that $M[i][j]$ is the number of elements in the $i$-th block which are equal to the $j$-th frequent value. Then we use a fast matrix multiplication algorithm to compute the product $B_F = M \cdot M^T$.

Now we need to take into account the rare values. We initialize $B_R$ to the zero matrix of size $n^\beta \times n^\beta$. For every pair $(i, j)$ of equal rare elements (i.e.~$A[i]=A[j]$ and $A[i]$ is rare) we calculate indices of the blocks $b(i)$ and $b(j)$ containing $i$ and $j$, respectively, and increment $B_R[b(i)][b(j)]$ by one. We have to iterate over at most $n^{2-\gamma}$ such pairs, since each rare element can appear in at most $n^{1-\gamma}$ pairs.

Observe that $B=B_F+B_R$. Now, we compute a matrix $S$, of size $(n^\beta + 1) \times (n^\beta + 1)$, such that $S[i][j] = \sum_{i' < i, j' < j} B[i'][j']$. This is done efficiently by using the recurrence equation $S[i+1][j+1] = S[i+1][j] + S[i][j+1] - S[i][j] + B[i][j]$.

Note that, by the inclusion-exclusion principle, four lookups to $S$ can answer any query aligned to full blocks. In order to be able to handle arbitrary queries, we also store, for each value $v$ appearing in $A$, the sorted array of indices at which this value appears $I_v = \mathop{\textrm{sorted}}(\{i: A[i]=v\})$. With binary search, we can use these arrays to compute in $\Oh{\log n}$ time, for a specified index $i$, the number of elements equal to $A[i]$ in a specified range, i.e.~the number of pairs of equal elements of the form $(i, *)$ or $(*, i)$ in that range.

When a query arrives, we first identify the blocks fully contained in the range, and use $S$ to get the number of pairs of equal elements in the subrange spanned by those blocks. Note that the query asks about pairs of the form $i<j$ and $S$ includes also $i=j$ and $i>j$. To correct for this fact we subtract the length of the subrange and divide the result by $2$. Finally, we use arrays $I_v$ to include pairs with at least one of the elements in one of two ``tails'', at the beginning and at the end of the range, each of length at most the size of a block, i.e.~$n^{1-\beta}$.

The preprocessing runs in $\Oh{n^{\omega(\beta, \gamma, \beta)} + n^{2-\gamma}}$ time. Recall that $\omega(a,b,c)$ denotes the exponent in time required to multiply an $n^a \times n^b$ matrix by an $n^b \times n^c$ matrix. In particular, $\omega(a,a,a)=a\cdot\omega$. After the preprocessing, each query takes $\Oh{n^{1-\beta} \log n}$ time. In total the running time of the algorithm is $\Oh{n^{\omega(\beta, \gamma, \beta)} + n^{2-\gamma} + qn^{1-\beta} \log n}$.

For $q=\Th{n}$, it is optimal to set $\beta=\gamma=\frac{2}{\omega+1}$, and we get the time complexity $\Ot{n^{\frac{2\omega}{\omega+1}}}$, which matches the upper bound following from the reduction to \textsc{EdgeTriangleCounting} (Theorem~\ref{thm:eqclass}) and the current fastest algorithm for that problem~\cite{Alon97}.

When $n$ and $q$ differ significantly from each other, optimizing the parameters $\beta$ and~$\gamma$ leads to a multiplication of rectangular matrices. We use a naive bound $\omega(a,b,c) \leq a+b+c-(3-\omega)\cdot\min(a,b,c)$, which follows from a block-splitting argument. It is possible to obtain better bounds~\cite{LeGall18}, however they require numerical analysis for each particular set of values $a, b, c$, and thus do not yield any meaningful closed-form formula for the running time of our algorithm. Moreover, if $\omega=2$, the above naive bound turns out to be tight. Let $\alpha$ denote $\frac{\log q}{\log n}$. We optimize by setting $\beta = \frac{2\alpha}{\omega+1}$ when $q \leq n$, $\beta = \frac{3 - \alpha + \omega \cdot (\alpha + 1)}{\omega+1}$ when $q > n$, and in both cases $\gamma = 1 + \beta - \alpha$, which gives the following running times:

\begin{figure}
\centering
\begin{tikzpicture}
\begin{axis}[
  xlabel={Number of queries $q$},
  ylabel={Running time $T(n,q)$},
  xmin=0, xmax=2, ymin=1, ymax=2, xtick={0,0.5,1,1.5,2}, ytick={1,1.33,1.41,2},
  xticklabels={$1$, $n^{1/2}$, $n$, $n^{3/2}$, $n^2$},
  yticklabels={$n$, $n^{4/3}$, $n^{1.41}$, $n^2$},
  legend pos=north west, legend cell align=left,
  ymajorgrids=true, xmajorgrids=true, grid style=dashed
]
\addplot[color=red,dashed] coordinates {(0,1) (1,1.41) (2,2)}; 
\addplot[color=green,dashed] coordinates {(0,1) (1,1.33) (2,2)};
\addplot[color=blue,dashed] coordinates {(0.5,1) (1,1.33) (2,2)};
\legend{{Upper bound, $\omega=2.373$}, {Upper bound, $\omega=2$}, {Lower bound}}
\end{axis}
\end{tikzpicture}
\caption{Time complexity of \textsc{RangeEqPairsQuery}.}
\label{fig:bounds}
\end{figure}
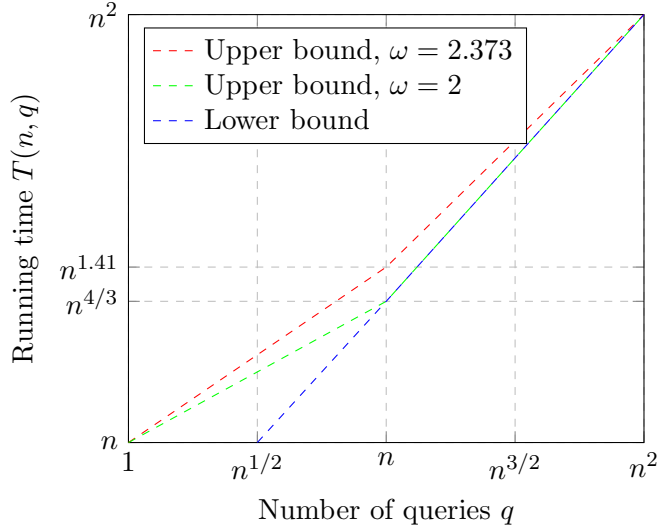

\AlgThm*

In the above description we assume implicitly that we know the number of queries $q$ in advance and can adjust parameters $\beta$ and $\gamma$ accordingly. Without this optimistic assumption we start hypothesizing $q = 1$, and whenever the actual number of queries exceeds the present guess of $q$, we multiply it by $2$, update the parameters and rerun the preprocessing. Doing so we lose only a constant multiplicative factor.

Let us compare this running time against a lower bound that follows from the set disjointness framework of Kopelowitz, Pettie and Porat~\cite{Kopelowitz16}.

\begin{theorem}[Kopelowitz, Pettie, Porat~\cite{Kopelowitz16} (rephrased)]
Unless the $3$SUM Hypothesis fails, for any constants $0\leq\lambda<1$, $\eps > 0$, there is no $\Oh{N^{2-\eps}}$ time algorithm that determines disjointness for each of $\Th{N^{1+\lambda}}$ pairs of sets from a family of $\Th{N}$ sets of size $\Th{N^{1-\lambda}}$ each.
\end{theorem}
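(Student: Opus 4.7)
The plan is to prove the theorem by reducing \textsc{3SUM} on $N$ integers to the claimed set-disjointness problem, so that an $\Oh{N^{2-\eps}}$ algorithm for the latter would violate the \textsc{3SUM} Hypothesis for the former. Setting up the correspondence $N\leftrightarrow n$, the goal is to encode every potential 3SUM triple as a pair disjointness query between two of the constructed sets, and to make the parameters $\Th{N}$ sets, each of size $\Th{N^{1-\lambda}}$, with $\Th{N^{1+\lambda}}$ pair queries match exactly.

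The first step is to reformulate \textsc{3SUM} as \textsc{Convolution-3SUM} (given $A[1..N]$, decide if $A[i]+A[j]=A[i+j]$ for some $i<j$). P\v{a}tra\c{s}cu's equivalence ensures that an $N^{2-\eps}$ algorithm for one yields the same for the other, so from this point I can work with \textsc{Convolution-3SUM}, whose fixed index constraint $i+j=k$ is what makes the parameters line up (without it we would overcount). The second step is a Dietzfelbinger-style almost-linear hash $h:\Z\to\Z_B$ with $B=N^{\lambda}$ buckets, chosen so that (i) most buckets $V_b=\{x : h(A[x])=b\}$ have size $\Th{N^{1-\lambda}}$ and over-full buckets contribute only $\Oh{N^{2-\lambda}}$ brute-force work, and (ii) any triple $(i,j,k)$ with $A[i]+A[j]=A[k]$ lies in a triple of buckets whose indices satisfy a congruence $b_i+b_j\equiv b_k+\delta\pmod B$ for $\delta$ in a constant-size set.

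Now I would translate each remaining candidate triple into a disjointness query. For each block $I$ of indices (there are $N^{\lambda}$ such blocks, giving buckets $V_I$ of size $\Th{N^{1-\lambda}}$) and each admissible partner block $K$ aligned by the convolution constraint $i+j\in K$, build two aligned sets: one is essentially $\{(x,A[i]) : i\in V_I\}$ and its partner is the ``matching set'' derived from $V_K$ shifted by the value/position offsets forced by $i+j=k$. The index constraint pins down a single valid partner block $J=J(I,K)$ per pair $(I,K)$, so the total number of disjointness queries is proportional to the number of aligned $(I,K)$ pairs, of which there are $\Th{N^{\lambda}}\cdot \Th{N}\cdot \polylog N$ after expanding out inner positions, i.e.\ $\Th{N^{1+\lambda}}$ once normalized. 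The family of distinct sets is of size $\Th{N}$, one per index (after ``unfolding'' each shifted bucket into its canonical representative indexed by the anchor element of $K$). A nonempty intersection of a query pair corresponds exactly to a solution of \textsc{Convolution-3SUM}, so any $\Oh{N^{2-\eps}}$ algorithm solves the entire instance in the same time.

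The main obstacle I expect is precisely the bookkeeping in the previous paragraph: the almost-linear hash has additive slack $\delta$, each input element can participate as $i$, $j$, or $k$, and buckets can deviate from their expected size, so one must verify carefully that the construction yields exactly $\Th{N}$ sets of size $\Th{N^{1-\lambda}}$ (rather than $\Th{N^{1+\lambda}}$ sets of that size), and that the $\Th{N^{1+\lambda}}$ queries exhaust all potential witnesses. Handling the tail of over-full buckets in time $\Oh{N^{2-\lambda}}=\oh{N^2}$ via direct enumeration, and absorbing the $\polylog N$ factors from hashing into the $o(1)$ slack in $\eps$, should close the argument once the structural reduction is verified.
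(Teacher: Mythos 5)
This theorem is stated in the paper as a \emph{rephrased} result cited directly from Kopelowitz, Pettie and Porat~\cite{Kopelowitz16}; the paper offers no proof of its own, and none is intended. So what you have written cannot be compared against an in-paper argument; it can only be judged as a reconstruction of the argument in~\cite{Kopelowitz16}.

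Your high-level strategy --- pass from \textsc{3SUM} to \textsc{Convolution-3SUM} via P\v{a}tra\c{s}cu's equivalence, bucket via an almost-linear hash, and recast the remaining candidate triples as set-disjointness queries --- is indeed the strategy of P\v{a}tra\c{s}cu and of Kopelowitz--Pettie--Porat, so the skeleton is sound. However, there is a genuine gap in the parameter accounting, exactly at the step you yourself flag as the ``main obstacle.'' If you split $[N]$ into $g$ index blocks and observe that the convolution constraint $i+j=k$ pins down the third block from the first two (up to $\pm 1$), you obtain $\Theta(g^2)$ compatible block triples, hence $\Theta(g^2)$ disjointness queries over sets of size $\Theta(N/g)$. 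To match the stated set size $\Theta(N^{1-\lambda})$ you must take $g = \Theta(N^\lambda)$, which yields only $\Theta(N^{2\lambda})$ queries --- short of the required $\Theta(N^{1+\lambda})$ by a factor of $N^{1-\lambda}$ --- and only $\Theta(N^\lambda)$ sets rather than $\Theta(N)$. Your claim that ``there are $\Theta(N^\lambda)\cdot\Theta(N)\cdot\polylog N$ [pairs] $\ldots$ i.e.\ $\Theta(N^{1+\lambda})$ once normalized'' and that one obtains ``$\Theta(N)$ sets, one per index, after unfolding'' does not follow from the construction as written, and does not explain where the missing $N^{1-\lambda}$ factor in the query count or the factor-$N^{1-\lambda}$ inflation in the number of sets come from. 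The actual reduction in~\cite{Kopelowitz16} resolves this by defining one set \emph{per original index} (not per block) --- roughly, a set of hashed shifted values over the index's local window --- so that there really are $\Theta(N)$ sets of size $\Theta(N^{1-\lambda})$ and the query graph is dense enough to reach $\Theta(N^{1+\lambda})$ pairs; without specifying the sets and the query pairs at this level of precision the counting cannot be verified, and as stated your numbers do not add up. The hashing with additive slack $\delta$ and the handling of over-full buckets in $\Oh{N^{2-\lambda}}$ time are fine in outline, but they do not rescue the central counting discrepancy.
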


Note that the above set disjointness problem easily reduces to \textsc{2RangeEqPairsQuery} with $\Th{N^{1+\lambda}}$ queries in an array of length $\Th{N \cdot N^{1-\lambda}}$. Therefore we have $T(N^{2-\lambda}, N^{1+\lambda}) = \Om{N^{2-\eps}}$, for any $0\leq\lambda<1$, which finally yields the following corollary.

\begin{corollary}
Unless the $3$SUM Hypothesis fails, there is no $\Oh{(nq)^{2/3-\eps}}$ time algorithm for \textsc{RangeEqPairsQuery} for any constant $\eps > 0$. This even holds restricted to instances with $q=n^{\alpha \pm \oh{1}}$, for arbitrarily chosen $\frac{1}{2} \leq \alpha < 2$.
\end{corollary}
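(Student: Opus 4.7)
The plan is to apply the preceding $3$SUM-based lower bound for set disjointness through a direct reduction to \textsc{2RangeEqPairsQuery}, then transfer the bound to \textsc{RangeEqPairsQuery} via Lemma~\ref{lem:eqclass}, and finally reparametrize everything in terms of the product $nq$ and the single variable $\alpha$.

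First I would make the reduction from set disjointness to \textsc{2RangeEqPairsQuery} explicit. Concatenate the $\Th{N}$ sets (in any fixed order) into a single array $A$ of length $n = \Th{N \cdot N^{1-\lambda}} = \Th{N^{2-\lambda}}$, and for each of the $q = \Th{N^{1+\lambda}}$ queried pairs of sets issue a \textsc{2RangeEqPairsQuery} query on the two corresponding subarrays, ordering them so that the earlier-placed one plays the role of $[l',r']$ (this guarantees $r'_i < l''_i$). Since a pair of sets intersects iff the returned count is nonzero, a counting algorithm for \textsc{2RangeEqPairsQuery} decides set disjointness. Thus the preceding theorem yields $T_{\textsc{2Range}}(n,q) = \Om{N^{2-\eps}}$ at $(n,q) = (N^{2-\lambda}, N^{1+\lambda})$ for every $0 \leq \lambda < 1$ and every $\eps > 0$, under the $3$SUM Hypothesis.

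Next I would rewrite this bound in terms of $nq$. Since $nq = N^{(2-\lambda)+(1+\lambda)} = N^3$, we have $N = (nq)^{1/3}$, so $N^{2-\eps} = (nq)^{2/3 - \eps/3}$; absorbing the factor $1/3$ into the constant gives $T_{\textsc{2Range}}(n,q) = \Om{(nq)^{2/3-\eps}}$ along this one-parameter family. To pass from \textsc{2RangeEqPairsQuery} to \textsc{RangeEqPairsQuery}, I would invoke the inclusion-exclusion direction of Lemma~\ref{lem:eqclass}: an algorithm for \textsc{RangeEqPairsQuery} running in time $T(n,q)$ yields one for \textsc{2RangeEqPairsQuery} in time $\Ot{T(n,4q)}$. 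Consequently any hypothetical $\Oh{(nq)^{2/3-\eps}}$ upper bound for \textsc{RangeEqPairsQuery} would pull back to one for \textsc{2RangeEqPairsQuery} of the same shape, contradicting the lower bound just derived (the constant factor $4$ on $q$ and the polylogarithmic overhead are harmlessly absorbed by shrinking $\eps$ slightly).

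Finally I would verify the $\alpha$-range clause. The map $\lambda \mapsto \alpha(\lambda) \eqdef (1+\lambda)/(2-\lambda)$ is a continuous strictly increasing bijection from $[0,1)$ onto $[\tfrac{1}{2}, 2)$, with $\alpha(0) = 1/2$ and $\alpha(\lambda) \to 2$ as $\lambda \to 1^-$. Hence for any prescribed $\alpha \in [\tfrac{1}{2}, 2)$ there is a choice of $\lambda$ making the hard instances satisfy $q = n^{\alpha \pm \oh{1}}$, which yields the stronger parameter-restricted lower bound. I do not anticipate a real obstacle here: every step is either a concatenation, an invocation of the preceding theorem, or the already-established Lemma~\ref{lem:eqclass}. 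The only thing requiring some care is bookkeeping the polylogarithmic and constant-factor overheads in the last reduction step, and checking that the image of $[0,1)$ under $\alpha(\cdot)$ is exactly the interval advertised in the statement.
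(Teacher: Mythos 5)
Your proposal is correct and follows exactly the same route the paper sketches: reduce the Kopelowitz--Pettie--Porat set-disjointness family to \textsc{2RangeEqPairsQuery} with $n=\Th{N^{2-\lambda}}$ and $q=\Th{N^{1+\lambda}}$, observe $nq=N^3$, transfer to \textsc{RangeEqPairsQuery} via the inclusion--exclusion direction of Lemma~\ref{lem:eqclass}, and note that $\lambda\mapsto(1+\lambda)/(2-\lambda)$ sweeps out $[\tfrac12,2)$. The paper leaves these details implicit; you have filled them in correctly, including the bookkeeping of constant and polylogarithmic overheads.
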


See Figure~\ref{fig:bounds} for a visual comparison of these bounds.

\section{Reduction from \texorpdfstring{$(\min,\max)$}{(min,max)}-product}

\label{sec:minmax}

\MinMaxThm*

\begin{proof}
Let $A$ and $B$ be the input matrices. For every $i=1,2,\ldots,n$ let $\mathcal{A}_i$ denote a~permutation of the column indices $\{1,2,\ldots,n\}$ sorted by their corresponding values in the $i$-th row of $A$, that is $A[i][\mathcal{A}_i[k]] \leq A[i][\mathcal{A}_i[k+1]]$ for every $k$. Analogously, for every $j=1,2,\ldots,n$ let $\mathcal{B}_j$ denote a permutation of the row indices of the $j$-th column of $B$ sorted in nondecreasing order of the entries. We concatenate all $\mathcal{A}_i$'s and all $\mathcal{B}_j$'s together to form a single array $\mathcal{T}$.

The key idea behind our reduction is the following simple equivalence.
\[
C[i][j] = \min_k \max(A[i][k], B[k][j]) \leq x
\quad\Longleftrightarrow\quad
\{k : A[i][k] \leq x \} \cap \{k: B[k][j] \leq x \} \neq \emptyset
\]
Observe that the two sets on the right-hand side are sets of elements of a prefix of $\mathcal{A}_i$ and a prefix of $\mathcal{B}_j$. Therefore, we can learn if a particular $C[i][j]$ is below or above a given threshold by asking a single query whether two ranges in $\mathcal{T}$ have disjoint sets of elements. The lengths of the relevant prefixes can be computed just before with a simple binary search in $\Oh{\log n}$ time. With $n^2$ queries we can learn this for the entire matrix $C$, and nothing forbids us from specifying a different threshold for each matrix cell.

We can assume that the entries in $A$ and $B$, and therefore also in $C$, are integers in $[1,2n^2]$. Indeed, if they were not, we could replace each entry with its position in the overall sorted order of all entries in $A$ or $B$. Hence, it takes $\Oh{\log n}$ steps of a parallel binary search -- each consisting of solving an (offline) instance of \textsc{2RangeDisjointQuery} with $n^2$ queries in an array of $n^2$ elements -- to compute the matrix $C$. 
\end{proof}

\section{Open problems}

A notable absence in our triangle-related problems landscape is \textsc{TriangleCounting} -- finding the total number of \emph{all} triangles in an $m$-edge graph.
The fastest known algorithm is the one provided by Alon, Yuster and Zwick~\cite{Alon97}, working in $\Oh{m^{2\omega/(\omega+1)}}$ time, which anyway solves \textsc{EdgeTriangleCounting} as a byproduct.
While a possible faster algorithm for \textsc{TriangleCounting} has not been (conditionally) ruled out, we believe that there may exist a reduction proving the equivalence of \textsc{EdgeTriangleCounting} and \textsc{TriangleCounting}.
\begin{problem}
Are \textsc{TriangleCounting} and \textsc{EdgeTriangleCounting} equivalent? If not, can we solve \textsc{TriangleCounting} faster than $\Oh{m^{2\omega/(\omega+1)}}$ time?
\end{problem}

The equivalent triangle and range query problems considered in this paper have a (conditional) lower bound of $m^{4/3-\oh{1}}$ and an upper bound of $\Oh{m^{2\omega/(\omega+1)}} \leq \Oh{m^{1.41}}$. Bridging this gap is a long-standing open question.
Obviously, if $\omega=2$, the question would be resolved. Suppose, however, that $\omega>2$ but a faster (e.g.~$\Ot{m^{4/3}}$ time) algorithm is found for \textsc{EdgeTriangleCounting} and thus, by Theorems~\ref{thm:eqclass} and~\ref{thm:feqclass}, for all our offline range query problems. Would it also imply faster online algorithms? In other words, can we use an offline range query algorithm as a black-box to find an online algorithm of the same complexity? As of now, we can neither provide such an equivalence, nor prove a higher conditional lower bound for online variants.

\begin{problem}
Are the offline and online variants of our range query problems equivalent?
\end{problem}

Finally, as can be seen in Figure~\ref{fig:bounds}, our lower bound and upper bound for range query problems do not match when the number of queries is sublinear in $n$, even if $\omega=2$.
For example, for $q = n^{1/2\pm\oh{1}}$ the lower bound becomes trivial $\Om{n}$, and the upper bound is $\Ot{n^{(3\omega+1)/(2\omega+2)}}$, which is at best $\Ot{n^{7/6}}$, if $\omega=2$.

\begin{problem}
What is the complexity of the range query problems for $q = n^{\alpha\pm\oh{1}}$ and $\alpha \in (0,1)$?
\end{problem}

\appendix

\section{Mo's algorithm}

\label{app:mo}

Mo's algorithm became a standard tool in the competitive programming community, but it seems virtually nonexistent in the theoretical computer science literature. For the sake of completeness, in this section we present the original offline algorithm, and propose how to generalize it for online problems by using persistent data structures.

\subsection{Offline algorithm}

For common range query problems the following observation holds: it is possible -- usually with a help of a simple data structure, e.g.~a binary search tree -- to quickly transform the answer for the current range into the answer for a range one element shorter or longer at either end, and then update the data structure, so that successive such transformations can be applied again and again. If that operation can be performed in polylogarithmic time, then a straightforward algorithm can answer $q$ queries in an array of length $n$ in $\Ot{n^2 + q}$ time, by precomputing all possible queries. To simplify further analysis we will assume $q \leq n^2$, otherwise the above algorithm is optimal. Another naive approach is to do no preprocessing and compute the answer to each query from scratch, in time $\Ot{nq}$. Mo's technique provides a simple framework reducing the runtime down to $\Ot{n\sqrt{q}}$.

Recall that we denote the $i$-th query by $[l_i, r_i]$, for $i=1,2,\ldots,q$. Observe that we can answer all the queries, one after the other, by performing $\sum_i \big(|l_i-l_{i-1}| + |r_i-r_{i-1}|\big)$ operations on the underlying data structure, each taking polylogarithmic time. Mo's central idea is to make this sum small by leveraging the fact that the algorithm works offline, i.e.~it knows all the queries in advance, and thus can handle them in a favorable order. Therefore, for a parameter $B$ to be determined later, it sorts the queries by $\lfloor l_i / B \rfloor$ and in case of a tie by $r_i$. To avoid double indexing, in the analysis below we work with this rearranged order.

Observe that $\sum_i |l_i-l_{i-1}| \leq \Oh{n + B\cdot q}$, because $|l_i - l_{i-1}|$ = $(l_i - l_{i-1}) + 2\cdot\max(l_{i-1} - l_i, 0)$, and $\sum_i (l_i-l_{i-1}) = l_q - l_1 \leq n$, while the ordering guarantees $l_{i-1} - l_i \leq B$. On the other hand, $\sum_i |r_i-r_{i-1}| \leq \Oh{(n/B)\cdot n}$, as there are at most $n/B$ different values of $\lfloor l_i / B \rfloor$ and the queries are sorted by increasing values of $r_i$'s within each value of $\lfloor l_i / B \rfloor$. This gives the total running time of $\Ot{B\cdot q+(n/B)\cdot n + n + q}$. We optimize it by setting $B=n/\sqrt{q}$ (thanks to the $q \leq n^2$ assumption we are guaranteed that $B\geq 1$), and get the desired $\Ot{n\sqrt{q}}$ time bound.

\subsection{Online algorithm}

Mo's algorithm can only be used offline, i.e.~when all the queries are known beforehand. There are other general methods for solving range query problems, which do work in the online setting, most notably a \emph{square root decomposition} method, i.e.~dividing the array into $\sqrt{n}$ consecutive blocks. However, for some problems, e.g.~calculating the number of distinct elements within a range, these methods do not seem to apply, while Mo's algorithm gives an immediate (albeit offline-only) solution. Fortunately, for usual range query problems Mo's algorithm uses common data structures, ones which are known to have \emph{fully persistent} equivalents -- a circumstance we shall exploit to design an online algorithm. Our solution matches the time complexity of the original Mo's algorithm, but it requires $\Ot{n\sqrt{q}}$ space while the offline version uses only linear space.

A data structure is called \textit{persistent} if it supports access to its older versions, called \emph{snapshots}. A persistent data structure is called \textit{fully persistent} if its snapshots can be modified, and therefore it allows working with a branched history. Many common data structures have been shown to have fully persistent equivalents with desirable properties, i.e.~polylogarithmic access and update times and constant space utilization per update~\cite{Driscoll89,Okasaki98}.

Suppose that it is possible to transform the answer for a given range into the answer for a range one element longer at either end\footnote{Note that, unlike in standard Mo's algorithm, we do not require computing the answer for a range one element shorter.} with a fully persistent data structure $S$ with polylogarithmic access and update times. In the preprocessing phase we choose a parameter $B$ and for each $j = 0,1,\ldots,\lfloor n/B\rfloor$ we incrementally compute (and store) the results for hypothetical queries $[B\cdot j, k]$ for each $k \in \{B\cdot j,\ldots,n\}$. That is, we keep one instance of $S$ per each $j$, and we take a snapshot of it for each $k$. By virtue of the properties of fully persistent data structures, this takes $\Ot{(n/B) \cdot n}$ time and space.

When a query $[l, r]$ arrives, we take the precomputed result and the related data structure snapshot for the hypothetical query $\big[B\cdot\lceil l/B \rceil, r\big]$ (unless $B\cdot\lceil l/B \rceil > r$, in which case we create an ad hoc instance of the data structure for the one-element $[r,r]$ range). We then extend the result in $\Ot{B}$ time by adding elements at the front of the range, one element at a time, until we reach $[l, r]$.

If we have an a priori bound on the number of queries, we optimize by setting $B=n/\sqrt{q}$, just like for the offline Mo's algorithm, and obtain an $\Ot{n\sqrt{q}}$ running time. Without such knowledge, we proceed analogously as in the algorithm of Theorem~\ref{thm:algorithm}: We start hypothesizing $q = 1$, and whenever the actual number of queries exceeds the present guess of $q$, we multiply it by $2$, adjust $B$ and rerun the preprocessing. Doing so we only lose a constant multiplicative factor.

\bibliography{paper}

\end{document}